%
%
\documentclass{scrartcl}

\usepackage{amsmath,amssymb}

\usepackage{amsmath,amsthm,amssymb,amsfonts,color}
\usepackage[latin1]{inputenc}

\newtheorem{theorem}{Theorem}[section]\newtheorem{lemma}[theorem]{Lemma}

\newtheorem{proposition}[theorem]{Proposition}
\newtheorem{corollary}[theorem]{Corollary}
\newtheorem{remark}[theorem]{Remark}

\parindent0ex
\parskip1ex

\newcommand{\norm}[1]{\left\| #1 \right\|}  
\newcommand{\N}{\mathbb{N}}  

\newcommand{\R}{\mathbb{R}}


\newcommand{\eps}{\varepsilon}

\newcommand{\ol}{\overline}

\DeclareFontFamily{U}{mathx}{\hyphenchar\font45}
\DeclareFontShape{U}{mathx}{m}{n}{
      <5> <6> <7> <8> <9> <10>
      <10.95> <12> <14.4> <17.28> <20.74> <24.88>
      mathx10
      }{}
\DeclareSymbolFont{mathx}{U}{mathx}{m}{n}
\DeclareFontSubstitution{U}{mathx}{m}{n}
\DeclareMathAccent{\widecheck}{0}{mathx}{"71}
\DeclareMathAccent{\wideparen}{0}{mathx}{"75}

\title{On the dynamics of the mean-field polaron in the weak-coupling limit}

\author{Marcel Griesemer, Jochen Schmid, Guido Schneider\\  
\small Fachbereich Mathematik, Universit\"at Stuttgart, D-70569 Stuttgart, Germany\\
\small firstname.lastname@mathematik.uni-stuttgart.de}

\date{}

\begin{document}

\maketitle

\begin{abstract}
We consider the dynamics of the mean-field polaron in the weak-coupling limit of vanishing electron-phonon interaction, $ \varepsilon \to 0$. This is a singular limit formally leading to a Schr\"odinger--Poisson system that is equivalent to the nonlinear Choquard equation. 
By establishing estimates between the  approximation obtained via the Choquard equation and true solutions of the original system we show  that the Choquard equation makes correct predictions about the dynamics of the polaron mean-field model for small values of $\varepsilon > 0 $. 
\end{abstract}

\section{Introduction}

In Pekar's model of (large) polarons a single electron interacts with a dielectric polarizable elastic medium. The polarization of the medium by the charge of the electron creates an electrostatic potential that, in turn, acts on the electron. In the stationary case this leads to a self-trapped state of the electron called polaron \cite{DA09, Lieb1977}. In the non-stationary case the electron triggers harmonic oscillation of the elastic medium and the combined system is described by the coupled equations
\begin{align} \label{polaron1}
i \partial_t u &=  -\Delta u + vu ,\\
\varepsilon^2 \partial^2_t v &= - v + \Delta^{-1} |u|^2, \label{polaron2}
\end{align}
for the wave function $ u = u(x,t) \in \mathbb{C} $ of  the electron and the electrostatic potential  $ v = v(x,t) \in \mathbb{R} $ associated with the  polarization of the medium. Here
$ x \in \mathbb{R}^3 $, $ t \in \mathbb{R} $, and $\Delta^{-1}|u|^2 = -(4\pi|\cdot|)^{-1} * |u|^2$. The parameter $\varepsilon > 0 $ plays the role of  the electron-phonon coupling strength or the inverse of the phonon frequency in a more fundamental, quantum field theoretic model of the polaron \cite{FG2015, BNRS00}.

The presence of the term $\varepsilon^2 \partial^2_t v$ in \eqref{polaron2} leads to retardation in the self-interaction, which makes it difficult to predict 
 the evolution of the electron. We are therefore interested in the question whether \eqref{polaron1}-\eqref{polaron2} may be solved approximately by dropping $\varepsilon^2 \partial^2_t v$ if  $\varepsilon$ is small. This approach leads to the Schr\"odinger--Poisson system
\begin{align} 
i \partial_t U &= -\Delta U + VU, \label{eq: hs1}\\
0 &= - V + \Delta^{-1} |U|^2, \label{eq: hs2}
\end{align}
which is equivalent to the Choquard equation
\begin{equation} \label{hartree}
i \partial_t U = -\Delta U + (\Delta^{-1} |U|^2) U. 
\end{equation}
Such nonlinear Hartree-type equations admit an interpretation as infinite-dimensional Hamiltonian systems, and stationary points of the associated Hamilton functionals lead to solitary-wave solutions \cite{FTY2002, LiLin2010}. In the present case, assuming the legitimacy of letting $\eps\to 0$, solitary-wave solutions describe frictionlessly moving polarons. The Choquard equation \eqref{hartree} in a different context describes the evolution of coherent states (condensates) of bosons in the mean-field limit, and it has been proposed as a model for gravity-induced decoherence \cite{EY2001, Penrose1998}.  Similar Hartree-type nonlinear equations arise in many further areas of mathematical physics.

The goal of the present paper is to prove the following approximation theorem.

\begin{theorem}\label{maintheorem}
If $U \in C([0,T_0],H^4(\R^3))$ is a solution of~\eqref{hartree} and $V = \Delta^{-1}|U|^2$, then for $C_1 >0$ there exist $C_2 > 0$ and $\eps_0' > 0$ such that
for all $\eps \in (0,\eps_0']$ and all solutions $u_{\eps}, v_{\eps}$ of~\eqref{polaron1} and \eqref{polaron2} satisfying
\begin{align*}
\norm{u_{\eps}(\cdot,0)-U(\cdot,0)}_{H^2} + \norm{v_{\eps}(\cdot,0)-V(\cdot,0)}_{L^{\infty}} + \norm{ \Delta \big( v_{\eps}(\cdot,0)-V(\cdot,0) \big) }_{L^2 \cap L^1} 
\le C_1 \eps
\end{align*}
and $$\norm{\partial_t v_{\eps}(\cdot,0)}_{L^{\infty}} + \norm{  \Delta \big( \partial_t v_{\eps}(\cdot,0) \big) }_{L^2 \cap L^1} \le C_1,$$ we have
\begin{align*}
\norm{u_{\eps}(\cdot,t)-U(\cdot,t)}_{H^2} + \norm{v_{\eps}(\cdot,t)-V(\cdot,t)}_{L^{\infty}}  + \norm{ \Delta \big( v_{\eps}(\cdot,t)-V(\cdot,t) \big) }_{L^2 \cap L^1} 
\le C_2 \eps
\end{align*}
for all $t \in [0,T_0]$. 
\end{theorem}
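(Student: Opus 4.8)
The plan is to track the errors $R := u_\eps - U$ and $S := v_\eps - V$ on $[0,T_0]$ by a Gronwall estimate together with a continuation argument, so one first invokes (or establishes) local well-posedness of \eqref{polaron1}--\eqref{polaron2} in $H^2(\R^3)$ times the potential space normed by $\norm{\cdot}_{L^\infty} + \norm{\Delta\,\cdot}_{L^2\cap L^1}$ (the natural norm for a Newtonian-type potential). Subtracting \eqref{eq: hs1}--\eqref{eq: hs2} from \eqref{polaron1}--\eqref{polaron2}, the function $R$ solves the inhomogeneous Schr\"odinger equation $i\partial_t R = -\Delta R + v_\eps R + SU$, while — since $V = \Delta^{-1}|U|^2$ has no fast-time dependence — the function $S$ solves the singularly perturbed harmonic oscillator
\begin{equation*}
\eps^2\partial_t^2 S + S = h - \eps^2\partial_t^2 V, \qquad h := \Delta^{-1}\big(|u_\eps|^2 - |U|^2\big) = \Delta^{-1}\big(2\,\mathrm{Re}(\ol U R) + |R|^2\big).
\end{equation*}
Here the source $h$ is genuinely of the size of $R$, and $\eps^2\partial_t^2 V$ is manifestly $O(\eps^2)$ once $\partial_t^2 V$ is bounded in the relevant norms; since $\partial_t^2 V = -\Delta^{-1}\nabla\cdot\partial_t J_U$ with $J_U := 2\,\mathrm{Im}(\ol U\nabla U)$, and inserting $\partial_t U = i\Delta U - iVU$ produces up to fourth-order spatial derivatives of $U$, this is precisely where the hypothesis $U \in C([0,T_0],H^4)$ is used.

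\textbf{Estimate for $R$.} Unitarity of the free Schr\"odinger group on $H^2$ and Duhamel's formula give $\norm{R(t)}_{H^2} \le \norm{R(0)}_{H^2} + \int_0^t\norm{v_\eps(s)R(s)+S(s)U(s)}_{H^2}\dd s$. Writing $v_\eps = V + S$, using that $V$ is smooth and bounded (Newtonian potential of $|U|^2 \in H^4$), that $H^2(\R^3)$ is an algebra, and Gagliardo--Nirenberg/Hardy--Littlewood--Sobolev estimates to bound $\norm{\nabla S}_{L^p}$ by $\sigma := \norm{S}_{L^\infty}+\norm{\Delta S}_{L^2\cap L^1}$, one gets (with constants depending on $\norm{U}_{C([0,T_0],H^4)}$) $\norm{v_\eps R + SU}_{H^2} \lesssim \rho + \sigma + \sigma\rho$ with $\rho := \norm{R}_{H^2}$; hence $\rho(t) \lesssim \eps + \int_0^t(\rho + \sigma)(s)\dd s$ as long as $\sigma$ stays bounded.

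\textbf{Estimate for $S$: the main point.} Variation of parameters yields
\begin{equation*}
S(t) = \cos(t/\eps)\,S(0) + \eps\sin(t/\eps)\,\partial_t S(0) + \tfrac1\eps\int_0^t\sin\!\big((t-s)/\eps\big)\big(h(s) - \eps^2\partial_s^2 V(s)\big)\dd s.
\end{equation*}
The oscillatory factors are scalar and thus harmless for the spatial norms, and the first two terms are $O(\eps)$ by hypothesis — note that $\norm{\partial_t v_\eps(0)}$ of order $1$, not $\eps$, is all that the explicit factor $\eps$ requires. Estimating the convolution directly would only give $O(1)$ because of the prefactor $1/\eps$; the remedy is one integration by parts in $s$ via $\sin((t-s)/\eps) = \partial_s\big[\eps\cos((t-s)/\eps)\big]$, giving
\begin{equation*}
\tfrac1\eps\int_0^t\sin\!\big((t-s)/\eps\big)h(s)\dd s = h(t) - \cos(t/\eps)\,h(0) - \int_0^t\cos\!\big((t-s)/\eps\big)\partial_s h(s)\dd s.
\end{equation*}
Now $\norm{h(t)} \lesssim \rho(t) + \rho(t)^2$, $\norm{h(0)} = O(\eps)$ by the initial bound, and — the key point — $\partial_s h$ is controlled by $\rho$ alone: differentiating $|u_\eps|^2$ and $|U|^2$ in time and using the continuity identity $\partial_t|w|^2 = -\nabla\cdot J_w$, the potentials $v_\eps$ and $V$ cancel, leaving $\partial_s h = -\Delta^{-1}\nabla\cdot(J_{u_\eps} - J_U)$, whence $\norm{\partial_s h(s)} \lesssim \rho(s) + \rho(s)^2$; a second integration by parts is neither available ($u_\eps$ is only controlled in $H^2$) nor needed. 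The forcing term contributes $-\eps\int_0^t\sin((t-s)/\eps)\,\partial_s^2 V(s)\dd s = O(\eps)$. Altogether $\sigma(t) \lesssim \eps + \rho(t) + \int_0^t(\rho + \rho^2)(s)\dd s$.

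\textbf{Closing the argument.} Inserting the bound for $\sigma$ into that for $\rho$ gives $\rho(t) \lesssim \eps + \int_0^t\rho(s)\dd s$ up to terms quadratic in $\rho$. On the maximal subinterval of $[0,T_0]$ on which $\rho,\sigma \le M\eps$, those quadratic terms are $O(\eps^2 T_0)$ and hence negligible for small $\eps$, so Gronwall's inequality yields $\rho(t) \le C'\eps$ with $C'$ depending only on $C_1$, $T_0$ and $\norm{U}_{C([0,T_0],H^4)}$, and then $\sigma(t) \le C''\eps$; choosing $M > C' + C''$ and $\eps_0'$ small enough closes the bootstrap, and since $u_\eps = U + R$ and $v_\eps = V + S$ then stay bounded, no blow-up occurs before $T_0$, so $C_2 := C' + C''$ works. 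The one genuinely delicate step is handling the $1/\eps$ in the oscillator's Duhamel formula, resolved by the time integration by parts, which hinges on the current-conservation structure annihilating the $v_\eps$-dependence of $\partial_s h$; the rest is a long but routine exercise in product and Newtonian-potential estimates in the spaces dictated by the statement.
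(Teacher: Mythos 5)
Your proposal is correct and follows essentially the same route as the paper: Duhamel for the Schr\"odinger error in $H^2$, the oscillatory variation-of-constants formula for the potential error, and the crucial single integration by parts in time that trades the $\eps^{-1}$ prefactor for one time derivative of the source (controlled via the Schr\"odinger equation at the cost of two spatial derivatives, whence $U\in H^4$), followed by Gronwall and a bootstrap/continuation argument excluding blow-up before $T_0$. The differences are only cosmetic: you work with the scalar second-order oscillator and unscaled errors where the paper uses the first-order system with the group $e^{\eps^{-1}\Lambda t}$ and errors scaled by $\eps^{-1}$, and your continuity-equation cancellation of the potentials in $\partial_s h$ is a slightly cleaner version of the paper's computation of $f_v'$ via $R_u' = i\Delta R_u - i f_u$.
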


See Section~\ref{sec: approx result} for  a more general version of this result -- and a more precise formulation specifying, for instance, the notion of solution employed above and the spaces the solutions live in. 
\begin{remark}
    {\rm Such approximation results should not be taken for granted. There are various counterexamples showing 
    that formally derived limit equations make wrong predictions about the original system \cite{Schn95MN,SSZ14}.}
\end{remark}
\begin{remark}\label{remark2}
     {\rm The approximation result is non-trivial since the Lipschitz constant of the right-hand side of the  first-order system, cf. \eqref{eq: int eq R_v,R_w}, 
associated to \eqref{polaron2} is of order $ \mathcal{O}(\varepsilon^{-1})$.
Hence, especially the nonlinear terms in \eqref{polaron2} in principle can lead to some unwanted growth rates $ \mathcal{O}(e^{\varepsilon^{-1} t}) $ for  $ t = \mathcal{O}(1) $.}
\end{remark}

The problem described in Remark~\ref{remark2} is overcome by an integration by parts w.r.t.~$ t $ in the variation of constants formula
associated to  \eqref{polaron2} and by a well adapted choice of spaces and norms. This allows us to use 
the highly oscillatory linear semigroup associated to  \eqref{polaron2} to get rid of the $ \varepsilon^{-1} $
in front of the nonlinear terms, cf. also Section \ref{remnormalform}. Our estimates imply, in particular, the existence of an interval $[0,T_0]$, that is independent of $\eps \in (0,\eps_0']$, on which the system has a unique (mild) solution.

While it appears natural, mathematically, to study the limit $\varepsilon\to 0$ of \eqref{polaron1}-\eqref{polaron2}, from a physical point of view the limit $\varepsilon \to\infty$ is even more relevant, because  the system \eqref{polaron1}-\eqref{polaron2}
is believed, and partly proven, to describe the strong coupling limit, $\varepsilon \to\infty$, of the Fr\"ohlich model of large polarons \cite{FS14, FG2015}. We remark that the Nelson model, which is similar to the Fr\"ohlich model, in a classical limit leads to the  Schr\"odinger-Klein-Gordon system \cite{AF2014}. The designation of \eqref{polaron1}-\eqref{polaron2} as \emph{mean-field polaron} in the title of our paper is adopted from  \cite{BNRS00}, where these equation, apparently, have been studied for the first time.

We conclude this introduction with some remarks on our notation. 
The intersection $X \cap Y$ and the product $X \times Y$ of two function spaces $X, Y$ on $\R^3$ will always be endowed with the sum norm $\norm{\cdot}_{X \cap Y} := \norm{\cdot}_X + \norm{\cdot}_Y$ and the product norm $\norm{(\cdot,\cdot \cdot)}_{X \times Y} := \norm{\cdot}_X + \norm{\cdot \cdot}_Y$, respectively. The operator norm for bounded operators between $X$ and $Y$ will be denoted by $\norm{\cdot}_{X,Y}$. We use $C_0(\R^3)$ to denote the space of continuous functions tending to $0$ at infinity, while compact support will be indicated by the notation $C_c^k(\R^3)$ for $k \in \N_0 \cup \{\infty\}$. Finally, distinct constants will be denoted with the same letter $C$ if they can be chosen independently of the small perturbation parameter $\varepsilon \ll 1 $.
\medskip

{\bf Acknowledgement.} The work of J.~Schmid  is partially supported by the Deutsche Forschungsgemeinschaft DFG 
through the Graduiertenkolleg GRK 1838 ,,Spectral Theory and Dynamics of Quantum Systems''.

\section{Spaces and operators} \label{sec: some preliminaries}

In this section, we introduce the spaces we will work with and investigate the mapping properties of the Laplace operator $\Delta$ in theses spaces. In particular, we will discuss the properties of the inverse operator $\Delta^{-1}$ appearing in the equations~\eqref{polaron2} and~\eqref{hartree}.
If $X, Y \subset L^1_{\text{loc}}(\R^3)$ are function spaces on $\R^3$, we will write $\Delta: D_{X,Y} \subset X \to Y$ to denote the linear operator 
$D_{X,Y} \ni u \mapsto \Delta u \in Y$
with domain
\begin{align*}
D_{X,Y} := \{ u \in X: \Delta u \in Y \},
\end{align*}
where $\Delta u = \partial_{x_1}^2 u+ \partial_{x_2}^2 u+ \partial_{x_3}^2 u$ denotes the distributional Laplacian of $u$. 
We will continually use the Sobolev spaces 
\begin{align*}
     X_s := H^s(\R^3) 
\quad \text{with} \quad \norm{u}_{X_s} := \Big( \int (1+|\xi|^2)^s \, |\widehat{u}(\xi)|^2 \,d\xi \Big)^{1/2}
\end{align*}
for $s \in [0,\infty)$, whose basic and completely well-known properties are summarized in the following lemma for the sake of easy reference.

\begin{lemma} \label{lm: X_s basics}
For $s,t \in [0,\infty)$, the following holds true:
\begin{itemize}
\item[(i)] $X_{s}$ is a Hilbert space which is continuously embedded in $X_{t}$ for all $t \le s$.
\item[(ii)] If $s \in [2,\infty)$, then $X_s \cdot X_s \subset X_s \cap L^1(\R^3)$ and there is a constant $C = C_s$ such that $\norm{u v}_{X_s \cap L^1} \le C \norm{u}_{X_s} \norm{v}_{X_s}$ for all $u, v \in X_s$.
\item[(iii)] $D_{X_{s},X_{s}} = X_{s+2}$ and $\Delta: X_{s+2} \subset X_{s} \to X_{s}$ is a self-adjoint linear operator satisfying $\norm{\Delta u}_{X_{s}} \le  \norm{u}_{X_{s+2}}$ for all $u \in X_{s+2}$. 
\end{itemize}
\end{lemma}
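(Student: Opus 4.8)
The plan is to transport everything to the Fourier side and reduce all three claims to elementary estimates for the weight $\langle\xi\rangle := (1+|\xi|^2)^{1/2}$. Concretely, I would introduce the map $T_s u := \langle\cdot\rangle^s\,\widehat u$; by the very definition of $\norm{\cdot}_{X_s}$ it is an isometry $X_s \to L^2(\R^3)$, and it is onto because $\mathcal F^{-1}(\langle\cdot\rangle^{-s}f) \in X_s$ for every $f \in L^2(\R^3)$. Thus $T_s$ is a unitary identification $X_s \cong L^2(\R^3)$, from which completeness of $X_s$ (hence the Hilbert-space property, the inner product being the pull-back of the one on $L^2$) is inherited; note also $X_0 = L^2(\R^3)$ isometrically by Plancherel. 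The embedding $X_s \hookrightarrow X_t$ for $t \le s$ then falls out of the pointwise bound $\langle\xi\rangle^t \le \langle\xi\rangle^s$ (valid since $\langle\xi\rangle \ge 1$), i.e.\ $\norm{u}_{X_t} \le \norm{u}_{X_s}$. This disposes of (i).

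For (ii) the $L^1$-estimate is immediate from Cauchy--Schwarz and (i): $\norm{uv}_{L^1} \le \norm{u}_{L^2}\norm{v}_{L^2} \le \norm{u}_{X_s}\norm{v}_{X_s}$. For the $X_s$-estimate I would first note that $u,v \in L^2(\R^3)$, so $uv \in L^1(\R^3)$ and $\widehat{uv}$ coincides, up to a dimensional constant, with the convolution $\widehat u * \widehat v$. Then I would use the \emph{triangle-type} inequality
\[
\langle\xi\rangle^s \le 2^s\big(\langle\xi-\eta\rangle^s + \langle\eta\rangle^s\big),
\]
which follows by distinguishing whether $|\xi-\eta|$ or $|\eta|$ is the larger, to obtain the pointwise bound
\[
\langle\xi\rangle^s\,|\widehat{uv}(\xi)| \le C\,\Big(\big(\langle\cdot\rangle^s|\widehat u|\big)*|\widehat v| + |\widehat u|*\big(\langle\cdot\rangle^s|\widehat v|\big)\Big)(\xi).
\]
Taking $L^2$-norms and invoking Young's inequality $\norm{f*g}_{L^2}\le\norm{f}_{L^2}\norm{g}_{L^1}$ leaves me with factors $\norm{\widehat u}_{L^1}$ and $\norm{\widehat v}_{L^1}$, which I would control by Cauchy--Schwarz, $\norm{\widehat w}_{L^1}\le\norm{\langle\cdot\rangle^{-s}}_{L^2}\norm{w}_{X_s}$, noting that $\norm{\langle\cdot\rangle^{-s}}_{L^2}^2 = \int_{\R^3}(1+|\xi|^2)^{-s}\,d\xi < \infty$ exactly because $2s \ge 4 > 3$. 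This gives $\norm{uv}_{X_s}\le C\norm{u}_{X_s}\norm{v}_{X_s}$, and together with the $L^1$-bound the claim follows.

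For (iii) I would use $\widehat{\Delta u}(\xi) = -|\xi|^2\widehat u(\xi)$ and the factorization $\langle\xi\rangle^{s+2} = \langle\xi\rangle^s(1+|\xi|^2)$: if $u \in X_{s+2}$ then $|\xi|^2 \le \langle\xi\rangle^2$ shows $\Delta u \in X_s$ with $\norm{\Delta u}_{X_s}\le\norm{u}_{X_{s+2}}$, while if $u,\Delta u \in X_s$ then $\langle\xi\rangle^{s+2}|\widehat u| \le \langle\xi\rangle^s|\widehat u| + \langle\xi\rangle^s|\xi|^2|\widehat u| \in L^2(\R^3)$, so $u \in X_{s+2}$; hence $D_{X_s,X_s} = X_{s+2}$. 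For self-adjointness I would push $\Delta$ through the unitary $T_s$ of part (i): $T_s\Delta T_s^{-1}$ is the operator of multiplication by the real-valued function $\xi\mapsto -|\xi|^2$ on $L^2(\R^3)$ with maximal domain $\{f\in L^2(\R^3): |\xi|^2 f\in L^2(\R^3)\} = T_s(X_{s+2})$; such multiplication operators are self-adjoint, and self-adjointness is invariant under unitary equivalence, so $\Delta: X_{s+2}\subset X_s\to X_s$ is self-adjoint.

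All of this is essentially bookkeeping on the Fourier side, so I do not anticipate a genuine obstacle. The only step requiring a little care is the algebra estimate in (ii); there the one thing to keep track of is that one needs $s$ above the Sobolev threshold $3/2$ in dimension three -- comfortably guaranteed by $s \ge 2$ -- for $\langle\cdot\rangle^{-s}$ to be square-integrable over $\R^3$, and this is precisely what makes the convolution argument close.
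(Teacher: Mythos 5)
Your proof is correct. Note, however, that the paper itself gives no proof of this lemma: it is stated as a collection of ``completely well-known properties'' of $H^s(\R^3)$, recorded only for reference. Your Fourier-side argument is the standard way to verify them, and it is in fact the same technique the paper deploys in its nearby proofs (the weight inequality $\rho_s(x)\le 2^s\rho_s(x-y)+2^s\sigma_s(y)$ combined with Young's inequality appears verbatim in the proof of Lemma~\ref{lm: u times delta^(-1)(w)}, and the multiplier bound of Lemma~\ref{lm: X_s-2 X_s in X_s-2} is of the same flavour), so your write-up simply supplies the omitted details, including the correct use of $s\ge 2>3/2$ to make $\langle\cdot\rangle^{-s}$ square-integrable and the unitary equivalence of $\Delta$ on $X_s$ with multiplication by $-|\xi|^2$ on $L^2$ for self-adjointness.
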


We will also need the following refinement of Lemma \ref{lm: X_s basics} (ii).

\begin{lemma} \label{lm: X_s-2 X_s in X_s-2}
If $s \in [2,\infty)$, then $X_{s-2} \cdot X_s \subset X_{s-2} \cap L^1(\R^3)$ and there is a constant $C = C_s$ such that
\begin{align*}
\norm{u v}_{X_{s-2} \cap L^1} \le C \norm{u}_{X_{s-2}} \norm{v}_{X_s} \qquad (u \in X_{s-2}, v \in X_s).
\end{align*}
\end{lemma}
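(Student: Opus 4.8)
The plan is to prove Lemma~\ref{lm: X_s-2 X_s in X_s-2} by the same Fourier-analytic technique used for the standard algebra property of Sobolev spaces (Lemma~\ref{lm: X_s basics}(ii)), but being careful that only \emph{one} factor carries the full $H^s$-regularity. The $L^1$ bound is the easy half: by the Cauchy--Schwarz inequality $\norm{uv}_{L^1} \le \norm{u}_{L^2}\norm{v}_{L^2} \le \norm{u}_{X_{s-2}}\norm{v}_{X_s}$ since $s-2 \ge 0$ and $s \ge 0$. So the work is entirely in showing $uv \in X_{s-2}$ with the asserted bound.

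For the $X_{s-2}$ bound, I would write $\widehat{uv} = \widehat{u} * \widehat{v}$ (up to a harmless constant) and estimate $\norm{uv}_{X_{s-2}}^2 = \int (1+|\xi|^2)^{s-2} |(\widehat u * \widehat v)(\xi)|^2\, d\xi$. Using the elementary Peetre-type inequality $(1+|\xi|^2)^{s-2} \le C\big((1+|\xi-\eta|^2)^{s-2} + (1+|\eta|^2)^{s-2}\big)$, which holds for the exponent $s-2\ge 0$, split the convolution into two pieces. In the piece where the weight lands on $\xi-\eta$, bound it by $\big((1+|\cdot|^2)^{(s-2)/2}|\widehat u|\big) * |\widehat v|$; in the other piece by $|\widehat u| * \big((1+|\cdot|^2)^{(s-2)/2}|\widehat v|\big)$. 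Apply Young's inequality $\norm{f*g}_{L^2}\le \norm{f}_{L^2}\norm{g}_{L^1}$ to each. This leaves factors $\norm{(1+|\cdot|^2)^{(s-2)/2}\widehat u}_{L^2} = \norm{u}_{X_{s-2}}$ and $\norm{(1+|\cdot|^2)^{(s-2)/2}\widehat v}_{L^2}=\norm{v}_{X_{s-2}}\le \norm{v}_{X_s}$, each multiplied by an $L^1$-norm of the \emph{other} transform. The crucial observation is that $\norm{\widehat w}_{L^1} \le C\norm{w}_{X_s}$ whenever $s > 3/2$ (here $s \ge 2$ suffices), because $\norm{\widehat w}_{L^1} = \int (1+|\xi|^2)^{-s/2}(1+|\xi|^2)^{s/2}|\widehat w(\xi)|\,d\xi \le \big(\int(1+|\xi|^2)^{-s}d\xi\big)^{1/2}\norm{w}_{X_s}$, and the weight is integrable over $\R^3$ precisely when $2s > 3$. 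Using this on the factor $v$ (which has the $X_s$-regularity to spare) in \emph{both} pieces of the split — i.e.\ estimating $\norm{\widehat v}_{L^1}\le C\norm{v}_{X_s}$ in the first piece and $\norm{\widehat u}_{L^1}$ is \emph{not} what we want there — forces a small rearrangement: in the piece carrying the weight on $\widehat u$, use Young as $\norm{(\text{weighted }\widehat u)*\widehat v}_{L^2}\le \norm{\text{weighted }\widehat u}_{L^2}\norm{\widehat v}_{L^1}\le \norm{u}_{X_{s-2}}\cdot C\norm{v}_{X_s}$; in the piece carrying the weight on $\widehat v$, use $\norm{\widehat u * (\text{weighted }\widehat v)}_{L^2}\le \norm{\widehat u}_{L^1}\norm{\text{weighted }\widehat v}_{L^2}$, and here bound $\norm{\widehat u}_{L^1}\le C\norm{u}_{X_{s-2}}$ — which again needs $2(s-2) > 3$, i.e.\ $s > 7/2$, and that is \emph{false} for $s$ near $2$.

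Because of this, the genuine obstacle — and the point where the proof is not merely a copy of the classical argument — is handling the low-frequency part of $u$ when $s$ is close to $2$, where $\widehat u$ need not be integrable. The fix is to not put all the weight-splitting symmetrically: instead, in the second piece, keep the weight $(1+|\eta|^2)^{(s-2)/2}$ on $\widehat v$ but bound $\norm{\widehat v}_{L^1}$ differently, or, cleaner, use the frequency-localized decomposition $\xi\mapsto \{|\xi-\eta|\le |\eta|\}\cup\{|\xi-\eta|>|\eta|\}$ from the outset so that on each region the larger frequency always sits on the $v$-factor, which can always absorb two extra derivatives \emph{and} supply the integrable tail $\norm{\langle\cdot\rangle^{-s}}_{L^2} <\infty$. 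Concretely: on $\{|\xi|\le 2|\eta|\}$ one has $\langle\xi\rangle^{s-2}\le C\langle\eta\rangle^{s-2}$, so bound by $\big(|\widehat u| * \langle\cdot\rangle^{s-2}|\widehat v|\big)(\xi)$ and use Young with $|\widehat u|\in L^1$? — no; rather use $\langle\cdot\rangle^{s-2}|\widehat v| = \langle\cdot\rangle^{-2}\langle\cdot\rangle^s|\widehat v|$ with $\langle\cdot\rangle^{-2}\in L^2(\R^3)$ to convert $\langle\cdot\rangle^s\widehat v\in L^2$ into something in $L^1$, leaving $|\widehat u|\in L^2$; then Young $L^2 * L^1 \to L^2$ with $u$ in $L^2 = X_0\supset X_{s-2}$? — one must check $L^2\hookleftarrow X_{s-2}$, true. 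Symmetrically on $\{|\xi|> 2|\eta|\}$ one has $|\xi-\eta|\ge |\xi|/2$, so $\langle\xi\rangle^{s-2}\le C\langle\xi-\eta\rangle^{s-2}$ and bound by $\big(\langle\cdot\rangle^{s-2}|\widehat u|\big)*|\widehat v|$, then Young with $\langle\cdot\rangle^{s-2}\widehat u\in L^2 = \norm{u}_{X_{s-2}}$ and $\widehat v\in L^1$, the latter controlled by $C\norm{v}_{X_s}$ since $s>3/2$. This asymmetric split is exactly what makes the $X_{s-2}\times X_s$ hypothesis (rather than $X_s\times X_s$) sufficient, and carrying out the two Young-inequality estimates with these explicit weight factorizations completes the proof.
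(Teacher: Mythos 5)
Your proof is correct, and it reaches the paper's estimate by a more self-contained route. The paper handles the $L^1$ part by Cauchy--Schwarz exactly as you do, and then quotes Friedlander's multiplier theorem (Theorem~9.3.5 of the cited book), which for Schwartz $v$ gives $\norm{uv}_{X_{s-2}} \le C\norm{u}_{X_{s-2}}\int(1+|\xi|^2)^{(s-2)/2}|\widehat v(\xi)|\,d\xi$; the weighted integral is then bounded by $\big(\int(1+|\xi|^2)^{-2}d\xi\big)^{1/2}\norm{v}_{X_s}$ via Cauchy--Schwarz, and a density argument removes the Schwartz restriction. You instead prove the multiplier estimate directly: writing $\widehat{uv}$ as a convolution and splitting frequencies asymmetrically so that the $(s-2)$-weight always lands on the factor that can afford it --- on $\widehat u$ when $|\xi-\eta|\gtrsim|\xi|$, where Young's $L^2*L^1$ inequality together with $\norm{\widehat v}_{L^1}\le\norm{(1+|\cdot|^2)^{-s/2}}_{L^2}\norm{v}_{X_s}$ (valid since $2s>3$) finishes the job, and on $\widehat v$ when $|\xi|\lesssim|\eta|$, where the factorization $(1+|\eta|^2)^{(s-2)/2}|\widehat v| = (1+|\eta|^2)^{-1}(1+|\eta|^2)^{s/2}|\widehat v|$ with $(1+|\cdot|^2)^{-1}\in L^2(\R^3)$ plays exactly the role of the paper's Cauchy--Schwarz step. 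Your preliminary observation that the symmetric Peetre split fails (it would require $s>7/2$ to make $\widehat u$ integrable) correctly identifies why the asymmetric split, i.e.\ why the hypothesis $X_{s-2}\times X_s$ rather than $X_s\times X_s$, is the right structure. The only cosmetic gap is that for general $u\in X_{s-2}$, $v\in X_s$ the identity $\widehat{uv}=(2\pi)^{-3/2}\,\widehat u*\widehat v$ deserves a word of justification (e.g.\ note that $\widehat u\in L^2$ and $\widehat v\in L^1\cap L^2$, or run the estimate for Schwartz $v$ and approximate, as the paper does).
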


\begin{proof}
We have only to show the inclusion $X_{s-2} \cdot X_s \subset X_{s-2}$ and the estimate for the $X_{s-2}$-norm because the respective inclusion and estimate for $L^1(\R^3)$ are an immediate consequence of Schwarz's inequality. It follows from Theorem~9.3.5 in~\cite{Friedlander} that for $u \in X_{s-2}$ and $v \in \mathcal{S}(\R^3)$ the product $uv$ belongs to $X_{s-2}$ with
\begin{align*}
\norm{uv}_{X_{s-2}} &\le C \norm{u}_{X_{s-2}} \int (1+|\xi|^2)^{(s-2)/2} |\widehat{v}(\xi)| \,d\xi \\
&\le C \Big( \int (1+|\xi|^2)^{-2} \,d\xi \Big)^{1/2} \norm{u}_{X_{s-2}} \norm{v}_{X_s} 
\le C \norm{u}_{X_{s-2}} \norm{v}_{X_s}.
\end{align*}
An obvious approximation argument now yields the assertion. 
\end{proof}

In the next lemma, we deal with the invertibility of $\Delta$ and the elementary properties of the inverse. 

\begin{lemma} \label{lm: inverse of laplace}
$\Delta: D_{C_0,L^2 \cap L^1} \subset C_0(\R^3) \to L^2(\R^3) \cap L^1(\R^3)$ is an invertible linear operator with full range
and bounded inverse $\Delta^{-1}$ satisfying
\begin{align} \label{eq: inverse of laplace}
\norm{\Delta^{-1}(w)}_{C_0} \le C \big( \norm{w}_{L^2} + \norm{w}_{L^1} \big) \quad \text{and} \quad \Delta^{-1}(w) = \gamma * w
\end{align}
for all $w \in L^2(\R^3) \cap L^1(\R^3)$, where $\gamma$ is the fundamental solution of Laplace's equation in~$\R^3$ with $\gamma(x) = -1/(4 \pi |x|)$ for $x \in \R^3 \setminus \{0\}$, and where $C$ is a constant independent of $w$.
\end{lemma}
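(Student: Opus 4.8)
The plan is to exhibit $w\mapsto \gamma*w$ as a bounded two-sided inverse of $\Delta$ on the indicated spaces. First I would check that $\gamma*w$ is a well-defined element of $C_0(\R^3)$ for every $w\in L^2(\R^3)\cap L^1(\R^3)$, together with the bound in~\eqref{eq: inverse of laplace}. To this end I split $\gamma=\gamma_1+\gamma_2$ into its restrictions to the closed unit ball and its complement. Since $|x|^{-1}$ and $|x|^{-2}$ are both locally integrable in $\R^3$, the compactly supported part satisfies $\gamma_1\in L^1(\R^3)\cap L^2(\R^3)$, whereas $\gamma_2$ is bounded and lies in $L^p(\R^3)$ for every $p>3$, in particular $\gamma_2\in L^\infty(\R^3)\cap L^4(\R^3)$. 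Now $\gamma_1*w\in C_0(\R^3)$ with $\norm{\gamma_1*w}_{L^\infty}\le \norm{\gamma_1}_{L^2}\norm{w}_{L^2}$ (convolution of two $L^2$-functions is continuous and vanishes at infinity), and likewise $\gamma_2*w\in C_0(\R^3)$ with $\norm{\gamma_2*w}_{L^\infty}\le\norm{\gamma_2}_{L^\infty}\norm{w}_{L^1}$, the membership in $C_0(\R^3)$ following from $\gamma_2\in L^4$, $w\in L^1\cap L^2\subset L^{4/3}$ and the fact that $L^p*L^{p'}\subset C_0(\R^3)$ for $1<p<\infty$. Adding the two estimates yields $\gamma*w\in C_0(\R^3)$ and the first inequality in~\eqref{eq: inverse of laplace}.

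Next I would verify $\Delta(\gamma*w)=w$ in the distributional sense, so that $\gamma*w\in D_{C_0,L^2\cap L^1}$ and $\Delta$ maps $D_{C_0,L^2\cap L^1}$ \emph{onto} $L^2(\R^3)\cap L^1(\R^3)$, with $\gamma*w$ a preimage of $w$. For $w\in\mathcal{S}(\R^3)$ this is the defining property of the fundamental solution, $\Delta(\gamma*w)=(\Delta\gamma)*w=\delta*w=w$; for general $w\in L^2\cap L^1$ it follows by testing against $\varphi\in C_c^\infty(\R^3)$ and using Fubini's theorem together with $\Delta\gamma=\delta$, i.e. $\int\gamma\,\Delta\varphi=\varphi(0)$. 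Alternatively one approximates $w$ in $L^2\cap L^1$ by Schwartz functions and passes to the limit using the bound from the first step, which shows $\gamma*w_n\to\gamma*w$ uniformly while $\Delta(\gamma*w_n)=w_n\to w$ in $L^2\cap L^1$.

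It then remains to prove injectivity. If $u\in D_{C_0,L^2\cap L^1}$ satisfies $\Delta u=0$, then by Weyl's lemma $u$ is harmonic on $\R^3$; being bounded (indeed in $C_0(\R^3)$) it is constant by Liouville's theorem, and the constant must be $0$ since $u$ vanishes at infinity. Hence $\Delta$ is injective, and combined with the surjectivity established above it is a linear bijection of $D_{C_0,L^2\cap L^1}$ onto $L^2(\R^3)\cap L^1(\R^3)$ whose inverse is exactly $w\mapsto\gamma*w$; boundedness of $\Delta^{-1}$ is the estimate from the first step, and $\Delta$ has full range by construction.

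The step demanding the most care is the first one: the boundedness of $\gamma*w$ is immediate from Young's and H\"older's inequalities, but the assertion is membership in $C_0(\R^3)$, so one must pay attention both to which Lebesgue exponents make the convolution estimates applicable and, above all, to the decay at infinity, which is obtained by approximating $\gamma_1,\gamma_2$ (or $w$) by compactly supported functions. Once this is in place, the identity $\Delta(\gamma*w)=w$ is a routine computation and injectivity is a one-line appeal to Liouville's theorem.
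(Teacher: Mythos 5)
Your argument is correct, but it runs along a genuinely different route than the paper's. The paper works entirely on the Fourier side: it proves injectivity via the structure theorem for distributions supported in $\{0\}$ (since $|\xi|^2\widehat u=0$ forces $\widehat u$ to be supported at the origin, hence $u$ a polynomial, hence $u=0$ in $C_0$), and it obtains surjectivity and the $C_0$-bound by showing $\widehat w/|\cdot|^2\in L^1(\R^3)$ through a frequency split, $|\xi|>1$ handled with Plancherel and $|\xi|\le 1$ with $\norm{\widehat w}_{C_0}\le C\norm{w}_{L^1}$, so that $\Delta^{-1}w=-\,\widecheck{\phantom{x}}(\widehat w/|\cdot|^2)\in C_0$ by Fourier inversion; the convolution formula $\Delta^{-1}w=\gamma*w$ is then an afterthought obtained from the convolution theorem. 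You instead take the kernel representation as the starting point: your near/far split $\gamma=\gamma_1+\gamma_2$ in physical space is the exact dual of the paper's low/high frequency split, your Young/H\"older estimates replace the Plancherel/Riemann--Lebesgue estimates, the identity $\Delta(\gamma*w)=w$ is checked distributionally from $\Delta\gamma=\delta$ (with a Fubini justification that your uniform bound indeed provides), and injectivity comes from Weyl's lemma plus Liouville rather than the structure theorem. Both proofs are of comparable length and yield the same statement; the paper's version has the side benefit of producing the multiplier identity $\widehat{\Delta^{-1}w}=-\widehat w/|\cdot|^2$ explicitly, which is reused verbatim in the proof of Lemma~\ref{lm: u times delta^(-1)(w)}, whereas your version makes the convolution formula, and hence the pointwise decay mechanism, more transparent and avoids tempered-distribution machinery except in the one-line verification that $\gamma$ is a fundamental solution.
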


\begin{proof}
Injectivity is a simple exercise using the structure theorem for distributions with support contained in $\{0\}$. Surjectivity and the properties of $\Delta^{-1}$ are equally simple. Indeed, if $w \in L^2(\R^3) \cap L^1(\R^3)$, then 
\begin{align} \label{eq: 1, inverse of laplace}
\int \frac{|\widehat{w}(\xi)|}{|\xi|^2} \,d\xi &\le \Big( \int_{|\xi| > 1} \frac{1}{|\xi|^4} \,d\xi \Big)^{1/2} \norm{\widehat{w}}_{L^2} + \Big( \int_{|\xi| \le 1} \frac{1}{|\xi|^2} \,d\xi \Big) \norm{\widehat{w}}_{C_0} \notag \\
&\le C ( \norm{w}_{L^2} + \norm{w}_{L^1} )
\end{align}
by the continuity of the Fourier transform from $L^1(\R^3)$ to $C_0(\R^3)$. So, $\widehat{w}/|\cdot|^{2} \in L^1(\R^3)$ and thus $v := - \,\,\widecheck\,\,(\widehat{w}/|\cdot|^{2}) \in C_0(\R^3)$ and, of course, $\Delta v = w$,
which proves that $\Delta: D_{C_0,L^2 \cap L^1} \subset C_0(\R^3) \to L^2(\R^3) \cap L^1(\R^3)$ is surjective and that
\begin{align} \label{eq: 2, inverse of laplace}
\Delta^{-1}(w) = - \,\,\widecheck\,\,\big(\widehat{w}/|\cdot|^{2}\big) \qquad (w \in L^2(\R^3) \cap L^1(\R^3)).
\end{align}
Combining \eqref{eq: 1, inverse of laplace} and \eqref{eq: 2, inverse of laplace}, we obtain the estimate in~\eqref{eq: inverse of laplace}. Additionally, we obtain from~\eqref{eq: 2, inverse of laplace} the convolution 
representation of $\Delta^{-1}(w)$ in~\eqref{eq: inverse of laplace} by virtue of the convolution theorem for tempered distributions (together with a suitable approximation argument).
\end{proof}

In the following, $\Delta^{-1}$ will always denote the operator from the lemma above or a restriction of that operator.  In order to control the nonlinear terms 
in \eqref{polaron2} and \eqref{hartree} we use:

\begin{lemma} \label{lm: u times delta^(-1)(w)}
If $s \in [2,\infty)$, then for all $u \in X_s$ and $w \in X_{s-2} \cap L^1(\R^3)$ one has $u \, \Delta^{-1}(w) \in X_s$ and 
\begin{align} \label{eq: u times ...}
\norm{u \, \Delta^{-1}(w)}_{X_s} \le C \norm{u}_{X_s} \big( \norm{w}_{X_{s-2}} + \norm{w}_{L^1} \big),
\end{align}
where $C=C_s$ is a constant independent of $u$ and $w$.
\end{lemma}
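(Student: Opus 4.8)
The plan is to split the claim into the two inclusions/estimates it bundles together: the $X_s$-part of the product $u\,\Delta^{-1}(w)$ and the control of $\Delta^{-1}(w)$ itself. The key observation is that $\Delta^{-1}(w)$ is not in $X_s$ (it decays too slowly at infinity), so we cannot simply invoke Lemma~\ref{lm: X_s basics}~(ii) or Lemma~\ref{lm: X_s-2 X_s in X_s-2}. Instead, the idea is to write $\Delta^{-1}(w) = \Delta^{-1}(w) - \varphi \cdot \Delta^{-1}(w) + \varphi \cdot \Delta^{-1}(w)$ for a suitable cutoff, or more cleanly: $u\,\Delta^{-1}(w)$ can be estimated by treating $\Delta^{-1}(w)$ as an $L^\infty \cap C^1$-type multiplier on the compactly-relevant part and handling the product via the structure of $X_s$ for $s\ge 2$.

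Concretely, first I would recall from Lemma~\ref{lm: inverse of laplace} that $\Delta^{-1}(w) \in C_0(\R^3)$ with $\norm{\Delta^{-1}(w)}_{C_0} \le C(\norm{w}_{L^2} + \norm{w}_{L^1}) \le C(\norm{w}_{X_{s-2}} + \norm{w}_{L^1})$, using $X_{s-2}\hookrightarrow L^2$ since $s-2\ge 0$. This handles the $L^\infty$-bound on the multiplier. Next, since $s \ge 2$, the space $X_s$ is a Banach algebra and, more to the point, multiplication by a function $g$ is bounded on $X_s$ whenever $g$ and enough of its derivatives lie in $L^\infty$ — for integer $s$ one needs $g \in W^{s,\infty}$, and for general $s \ge 2$ one can use a fractional Leibniz (Kato--Ponce) estimate $\norm{fg}_{X_s} \le C(\norm{f}_{X_s}\norm{g}_{L^\infty} + \norm{f}_{L^\infty}\norm{g}_{X_s})$ applied with $f = u$, $g = \Delta^{-1}(w)$. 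The first term is then controlled by $\norm{u}_{X_s}\norm{\Delta^{-1}(w)}_{C_0}$, which is already bounded as desired. For the second term I would need $\norm{u}_{L^\infty} \le C\norm{u}_{X_s}$ (Sobolev embedding, valid since $s \ge 2 > 3/2$) and $\norm{\Delta^{-1}(w)}_{X_s} \lesssim \norm{w}_{X_{s-2}} + \norm{w}_{L^1}$; but $\Delta^{-1}(w) \notin X_s$ in general, so the second term must be rearranged rather than estimated termwise. The honest route is therefore: apply the fractional Leibniz rule in the form that distributes $s$ derivatives, i.e. $\norm{D^s(u\,\Delta^{-1}w)}_{L^2} \le C\sum_{j=1}^{s}\norm{D^j u}_{L^{p_j}}\norm{D^{s-j}\Delta^{-1}w}_{L^{q_j}} + \norm{u}_{L^\infty}\norm{D^s \Delta^{-1}w}_{L^2}$, and note $D^s\Delta^{-1}w = \Delta^{-1}D^s w$ has a harmless low-frequency part — or, cleaner still, that $D^2\Delta^{-1}w$ is a Riesz-transform combination of $w$, hence bounded $L^2\to L^2$ and $X_{s-2}\to X_{s-2}$, so $\norm{D^s\Delta^{-1}w}_{L^2} = \norm{D^{s-2}(D^2\Delta^{-1}w)}_{L^2} \le C\norm{w}_{X_{s-2}}$, while the intermediate terms $D^{s-j}\Delta^{-1}w$ with $1\le s-j\le s-1$ are likewise controlled by interpolation between $\norm{\Delta^{-1}w}_{L^\infty}$ (the $C_0$ bound) and $\norm{D^2\Delta^{-1}w}_{X_{s-2}}$. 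Collecting these pieces gives \eqref{eq: u times ...}.

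For integer $s$ — which is in fact all that is used later ($s=2,4$) — one can avoid Kato--Ponce entirely: expand $\Delta^{-1}(w) = \gamma * w$ and compute $\partial^\alpha \Delta^{-1}(w)$ for $|\alpha|\le s$ via the Leibniz rule on $u\,\Delta^{-1}(w)$, distributing derivatives onto the two factors; derivatives falling on $u$ are absorbed into $\norm{u}_{X_s}$, while for $\Delta^{-1}(w)$ one uses $\partial^\beta \Delta^{-1}(w) = \partial^{\beta - 2e_i}\Delta^{-1}(\partial_i^2 w) $-type manipulations or directly the Fourier representation $\widehat{\partial^\beta \Delta^{-1}w}(\xi) = (i\xi)^\beta |\xi|^{-2}\widehat w(\xi)$, splitting $|\xi|\le 1$ and $|\xi|>1$ exactly as in the proof of Lemma~\ref{lm: inverse of laplace}: on $|\xi|>1$ the symbol $(i\xi)^\beta|\xi|^{-2}$ is bounded by $(1+|\xi|^2)^{(|\beta|-2)/2}$, giving control by $\norm{w}_{X_{|\beta|-2}} \le \norm{w}_{X_{s-2}}$ when $|\beta|\ge 2$, and on $|\xi|\le 1$ the factor $|\xi|^{|\beta|-2}$ is in $L^1$ (for $|\beta|\ge 0$ in $\R^3$ provided $|\beta| - 2 > -3$, i.e. always) against $\norm{\widehat w}_{L^\infty} \le \norm{w}_{L^1}$; the case $|\beta|\le 1$ is subsumed in the $C_0$ estimate together with the $|\beta|=0$ low-frequency bound.

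\textbf{Main obstacle.} The one genuinely delicate point is that $\Delta^{-1}(w)$ does \emph{not} lie in $X_s$ — it behaves like $|x|^{-1}$ at infinity — so a naive Banach-algebra or Leibniz argument does not close. The resolution (and the reason the hypotheses include $w \in L^1$, not just $w \in X_{s-2}$) is that only the \emph{derivatives} of order $\ge 2$ of $\Delta^{-1}(w)$, which are genuine $X_{s-2}$ functions via the Riesz/Calderón--Zygmund bound, together with the $C_0$-bound on $\Delta^{-1}(w)$ itself from Lemma~\ref{lm: inverse of laplace}, ever need to be used; the slowly-decaying function $\Delta^{-1}(w)$ is always multiplied by $u \in X_s$, which decays, so the product is fine. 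Making this splitting precise — separating the $|\alpha|=0$ contribution (handled by $\norm{u}_{X_s}\norm{\Delta^{-1}w}_{L^\infty}$) from the $|\alpha|\ge 1$ contributions (handled by derivatives of $\Delta^{-1}w$) — is the crux, and is cleanly done on the Fourier side by the low/high frequency split already used twice in this section.
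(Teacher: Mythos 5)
Your proposal is correct in substance and isolates exactly the right difficulty: $\Delta^{-1}(w)\notin X_s$, so the product can only be closed because (a) $\Delta^{-1}(w)$ itself enters solely through a sup-type bound (Lemma~\ref{lm: inverse of laplace}) and (b) whenever the full weight $|\xi|^s$ (equivalently, $s$ derivatives) lands on $\Delta^{-1}(w)$, two of them cancel the $\Delta^{-1}$, leaving control by $\norm{w}_{X_{s-2}}$. The paper exploits the same two facts, but implements them in a more elementary, self-contained way: it writes $\widehat{u\,\Delta^{-1}w}=c\,\widehat u*\widehat v$ with $v=\Delta^{-1}w$, splits the weight via $(1+|x|^2)^{s/2}\le C\big[(1+|x-y|^2)^{s/2}+|y|^s\big]$, and applies Young's inequality, using $\norm{\widehat v}_{L^1}\le C(\norm{w}_{X_{s-2}}+\norm{w}_{L^1})$ (the estimate already derived in the proof of Lemma~\ref{lm: inverse of laplace}) in place of your $\norm{v}_{C_0}$, together with $\norm{|\cdot|^s\widehat v}_{L^2}=\norm{|\cdot|^{s-2}\widehat w}_{L^2}\le\norm{w}_{X_{s-2}}$ and $\norm{\widehat u}_{L^1}\le C\norm{u}_{X_s}$; a density argument in $u$ finishes. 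By contrast, you invoke heavier external machinery (Kato--Ponce, Riesz-transform/Calder\'on--Zygmund boundedness, Gagliardo--Nirenberg interpolation, or the classical Leibniz rule for integer $s$). That route does work, but two remarks: your sum-over-$j$ Leibniz expansion is literal only for integer $s$, whereas the lemma (and the paper's one-line Fourier argument) covers all real $s\ge 2$ uniformly; and your hesitation that the two-term estimate ``must be rearranged rather than estimated termwise'' is unnecessary once you use the \emph{homogeneous} derivative: $\norm{|\nabla|^s(uv)}_{L^2}\lesssim\norm{|\nabla|^su}_{L^2}\norm{v}_{L^\infty}+\norm{u}_{L^\infty}\norm{|\nabla|^sv}_{L^2}$ closes directly since $|\nabla|^s\Delta^{-1}w=-|\nabla|^{s-2}w$, which is precisely the cancellation the paper encodes through the homogeneous weight $\sigma_s(\xi)=|\xi|^s$. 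So: same key idea, different (and somewhat less economical) technical packaging.
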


\begin{proof}
Clearly, we have to show the assertion only for $u \in \mathcal{S}(\R^3)$. So let $u \in \mathcal{S}(\R^3)$ and $w \in X_{s-2} \cap L^1(\R^3)$ and set $v := \Delta^{-1}(w)$. Also, write 
\begin{align*}
\rho_{s}(\xi) := (1+|\xi|^2)^{s/2} \quad \text{and} \quad \sigma_{s}(\xi) := |\xi|^s
\end{align*}
for $\xi \in \R^3$. Since $\widehat{v} = -\, \widehat{w}/|\cdot|^{2}$ by~\eqref{eq: 2, inverse of laplace}, it follows from~\eqref{eq: 1, inverse of laplace} that $\widehat{v} \in L^1(\R^3)$ and $\sigma_{s} \, \widehat{v} \in L^2(\R^3)$ with
\begin{align} \label{eq: u times ..., 1}
\norm{\widehat{v}}_{L^1} \le C ( \norm{w}_{X_{s-2}} + \norm{w}_{L^1} )
\quad  \text{and} \quad
\norm{\sigma_{s} \, \widehat{v}}_{L^2} \le \norm{w}_{X_{s-2}}
\end{align}
where $C$ is a constant independent of $w$ (and $s$). So, $\widehat{u} \in \mathcal{S}(\R^3)$ and $\widehat{v} \in L^1(\R^3) \subset \mathcal{S}'(\R^3)$ are classically convolvable and thus, by the convolution theorem for tempered distributions, we see that
\begin{align}
\widehat{uv}(x) = (2\pi)^{-3/2}\, (\widehat{u} * \widehat{v})(x) = (2\pi)^{-3/2} \int \widehat{u}(x-y) \widehat{v}(y) \,dy \qquad (x \in \R^3).
\end{align}
Since $\rho_{s}(x) \le C \rho_{s}(x-y) + C \sigma_{s}(y)$ for all $x,y \in \R^3$ with $C = 2^s$, it follows that
\begin{align*}
\rho_{s}(x) | \widehat{uv}(x)| \le C \big( (\rho_{s} |\widehat{u}|) * |\widehat{v}| \big)(x) + C \big( |\widehat{u}| * (\sigma_{s} |\widehat{v}|) \big)(x)
\end{align*}
for all $x \in \R^3$ and therefore
\begin{align}
\Big( \int \big( \rho_{s}(x) | \widehat{uv}(x)| \big)^2 \,dx \Big)^{1/2} \le C \norm{ \rho_{s} |\widehat{u}|}_{L^2} \norm{\widehat{v}}_{L^1} + C \norm{\widehat{u}}_{L^1} \norm{\sigma_{s} |\widehat{v}|}_{L^2} \notag \\
\le C \norm{u}_{X_s} \big( \norm{w}_{X_{s-2}} + \norm{w}_{L^1} \big) + C \norm{u}_{X_s} \norm{w}_{X_{s-2}} 
\end{align}
by Young's inequality and by the inequalities~\eqref{eq: u times ..., 1}. So, we have $u \, \Delta^{-1}(w) \in X_s$ and the estimate~\eqref{eq: u times ...} holds true, as desired. 
\end{proof}

In view of the above lemmas, we introduce the spaces
\begin{align*}
Y_s := \Delta^{-1}(X_s \cap L^1(\R^3)) \quad \text{with} \quad \norm{v}_{Y_s} := \norm{v}_{C_0} + \norm{\Delta v}_{X_s} + \norm{\Delta v}_{L^1}
\end{align*}
for $s \in [0,\infty)$, whose basic properties are summarized in the following lemma.

\begin{lemma} \label{lm: Y_s basics}
For $s,t \in [0,\infty)$, the following holds true:
\begin{itemize}
\item[(i)] $Y_{s}$ is a Banach space which is continuously embedded in $Y_{t}$ for all $t \le s$.
\item[(ii)] If $s \in [2,\infty)$, then $X_s \cdot Y_{s-2} \subset X_s$ and there is a constant $C = C_s$ such that $\norm{u v}_{X_s} \le C \norm{u}_{X_s} \norm{v}_{Y_{s-2}}$ for all $u \in X_s$ and $v \in Y_{s-2}$.
\item[(iii)] $\Delta^{-1}: X_{s} \cap L^1(\R^3) \to Y_{s}$ is a bounded linear operator.
\end{itemize}
\end{lemma}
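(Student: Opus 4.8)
The plan is to reduce all three claims to the lemmas already at hand, the key observation being that $\Delta^{-1}$ identifies $Y_s$ with the Banach space $X_s \cap L^1(\R^3)$. First note that for $v \in Y_s$ one has $\Delta v \in X_s \cap L^1(\R^3) \subset L^2(\R^3) \cap L^1(\R^3)$ (using $X_s \subset L^2(\R^3)$) and $v \in C_0(\R^3)$, so $v$ lies in the domain of the operator of Lemma~\ref{lm: inverse of laplace}; by its injectivity, $v = \Delta^{-1}(\Delta v)$. Together with the bound $\norm{\Delta^{-1}(w)}_{C_0} \le C(\norm{w}_{L^2} + \norm{w}_{L^1}) \le C\norm{w}_{X_s \cap L^1}$ from \eqref{eq: inverse of laplace}, this shows that $v \mapsto \Delta v$ is a bi-Lipschitz linear isomorphism of $(Y_s,\norm{\cdot}_{Y_s})$ onto $X_s \cap L^1(\R^3)$ equipped with the sum norm. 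Since the latter is complete and embeds continuously into $X_t \cap L^1(\R^3)$ for $t \le s$ by Lemma~\ref{lm: X_s basics}(i), both parts of~(i) follow. (Alternatively, completeness may be checked by hand: a $Y_s$-Cauchy sequence $v_n$ converges to some $v$ in $C_0(\R^3)$ and $\Delta v_n$ to some $w$ in $X_s \cap L^1(\R^3)$; since convergence in $C_0(\R^3)$ implies convergence in $\mathcal{S}'(\R^3)$ and $\Delta$ is continuous on $\mathcal{S}'(\R^3)$, we get $\Delta v = w$, so $v \in Y_s$ and $v_n \to v$ in $Y_s$.)

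For~(ii), let $s \ge 2$, $u \in X_s$, $v \in Y_{s-2}$, and put $w := \Delta v \in X_{s-2} \cap L^1(\R^3)$. As in~(i) we have $v = \Delta^{-1}(w)$, so Lemma~\ref{lm: u times delta^(-1)(w)} applied with the exponent $s$ gives $uv = u\,\Delta^{-1}(w) \in X_s$ together with $\norm{uv}_{X_s} \le C\norm{u}_{X_s}\big(\norm{w}_{X_{s-2}} + \norm{w}_{L^1}\big) = C\norm{u}_{X_s}\big(\norm{\Delta v}_{X_{s-2}} + \norm{\Delta v}_{L^1}\big) \le C\norm{u}_{X_s}\norm{v}_{Y_{s-2}}$. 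For~(iii), linearity is clear, and for $w \in X_s \cap L^1(\R^3)$ one has $\Delta^{-1}(w) \in C_0(\R^3)$ with $\Delta(\Delta^{-1}(w)) = w \in X_s \cap L^1(\R^3)$, hence $\Delta^{-1}(w) \in Y_s$ and $\norm{\Delta^{-1}(w)}_{Y_s} = \norm{\Delta^{-1}(w)}_{C_0} + \norm{w}_{X_s} + \norm{w}_{L^1} \le C(\norm{w}_{L^2} + \norm{w}_{L^1}) + \norm{w}_{X_s} + \norm{w}_{L^1} \le C\norm{w}_{X_s \cap L^1}$ by~\eqref{eq: inverse of laplace} and $\norm{w}_{L^2} \le \norm{w}_{X_s}$.

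I do not anticipate a genuine obstacle; the only point deserving care is the legitimacy of the identity $v = \Delta^{-1}(\Delta v)$ for $v \in Y_s$, i.e.\ that $Y_s$ is contained in the domain of the operator of Lemma~\ref{lm: inverse of laplace} — which is exactly where the embedding $X_s \subset L^2(\R^3)$ and the injectivity assertion of that lemma enter. Once this is in place, everything else is bookkeeping with the sum norms.
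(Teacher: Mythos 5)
Your proposal is correct and follows essentially the same route as the paper: completeness of $C_0(\R^3)$ and $X_s \cap L^1(\R^3)$ together with the boundedness of $\Delta^{-1}$ from Lemma~\ref{lm: inverse of laplace} for part~(i), and Lemmas~\ref{lm: u times delta^(-1)(w)} and~\ref{lm: inverse of laplace} for parts~(ii) and~(iii). You merely spell out the details the paper leaves implicit (the identity $v=\Delta^{-1}(\Delta v)$ on $Y_s$ and the resulting norm equivalence), which is fine.
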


\begin{proof}
Assertion~(i) easily follows by the completeness of $C_0(\R^3)$ and $X_{s} \cap L^1(\R^3)$ and by the boundedness of $\Delta^{-1}: X_{s} \cap L^1(\R^3) \to C_0(\R^3)$ (Lemma~\ref{lm: inverse of laplace}). Assertions~(ii) and (iii) are immediate consequences of Lemma~\ref{lm: u times delta^(-1)(w)} and Lemma~\ref{lm: inverse of laplace} respectively.
\end{proof}


\section{Solvability of the equations}

In this section, we discuss the solvability of the equations~\eqref{polaron1}-\eqref{polaron2} and of~\eqref{hartree}, which is of course the very first thing to do in proving the desired approximation result.
We start with  the approximation equation~\eqref{hartree} and first show mild and classical solvability of the corresponding abstract initial value problem
\begin{align} \label{eq: approx eq, abstract}
U' = i\Delta U -i U \Delta^{-1}(|U|^2), \quad \text{with} \quad U(0) = U_0
\end{align}
in the sense of \cite{Pazy}.

\begin{theorem} \label{thm: approx eq mild sol}
If $s \in [0,\infty)$, then for every $U_0 \in X_{s+2}$ there exists a $T_0 > 0$ 
and a unique mild solution $U \in C(I,X_{s+2})$ of~\eqref{eq: approx eq, abstract} on $I = [0,T_0]$. 
\end{theorem}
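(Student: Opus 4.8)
The plan is to recast \eqref{eq: approx eq, abstract} as a semilinear evolution equation in the Banach space $X_{s+2}$ and apply the standard local existence theory for mild solutions (Segal / Pazy). The free part is the Schr\"odinger group $e^{it\Delta}$, which by Lemma~\ref{lm: X_s basics}(iii) (together with the spectral calculus for the self-adjoint operator $\Delta$) restricts to a $C_0$-group of isometries on each $X_{s+2}$; in fact $e^{it\Delta}$ commutes with $(1-\Delta)^{(s+2)/2}$ and is unitary on $X_r$ for every $r$. The nonlinearity is $F(U) := -\,i\,U\,\Delta^{-1}(|U|^2)$, and the crucial point is that $F$ maps $X_{s+2}$ into $X_{s+2}$ and is locally Lipschitz there. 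This is exactly what Lemma~\ref{lm: u times delta^(-1)(w)} delivers: taking the ``$s$'' of that lemma to be our $s+2$ (note $s+2 \ge 2$), and using Lemma~\ref{lm: X_s basics}(ii) to get $|U|^2 = U\overline U \in X_{s+2} \cap L^1$ with $\norm{|U|^2}_{X_{s+2}\cap L^1} \le C\norm{U}_{X_{s+2}}^2$, we conclude
\begin{align*}
\norm{F(U)}_{X_{s+2}} \le C\,\norm{U}_{X_{s+2}}\big(\norm{|U|^2}_{X_{s+2}} + \norm{|U|^2}_{L^1}\big) \le C\,\norm{U}_{X_{s+2}}^3 .
\end{align*}

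Next I would establish the local Lipschitz estimate on balls. Writing $|U|^2 - |\tilde U|^2 = U(\overline U - \overline{\tilde U}) + \overline{\tilde U}(U - \tilde U)$ and
\begin{align*}
F(U) - F(\tilde U) = -i(U-\tilde U)\,\Delta^{-1}(|U|^2) - i\,\tilde U\,\Delta^{-1}(|U|^2 - |\tilde U|^2),
\end{align*}
and applying Lemma~\ref{lm: u times delta^(-1)(w)} and Lemma~\ref{lm: X_s basics}(ii) to each term, one gets
\begin{align*}
\norm{F(U) - F(\tilde U)}_{X_{s+2}} \le C\big(\norm{U}_{X_{s+2}}^2 + \norm{\tilde U}_{X_{s+2}}^2 + \norm{U}_{X_{s+2}}\norm{\tilde U}_{X_{s+2}}\big)\,\norm{U - \tilde U}_{X_{s+2}},
\end{align*}
so $F$ is Lipschitz on every bounded subset of $X_{s+2}$. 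With $e^{it\Delta}$ a $C_0$-group of contractions (isometries) on $X_{s+2}$ and $F\colon X_{s+2}\to X_{s+2}$ locally Lipschitz, the contraction mapping principle applied to the Duhamel map
\begin{align*}
(\Phi U)(t) := e^{it\Delta} U_0 + \int_0^t e^{i(t-\tau)\Delta} F(U(\tau))\,\dd\tau
\end{align*}
on a ball in $C([0,T_0],X_{s+2})$, with $T_0$ chosen small depending on $\norm{U_0}_{X_{s+2}}$, yields a unique mild solution; this is precisely Theorem~6.1.4 in~\cite{Pazy}. Uniqueness on the whole interval follows by a standard Gr\"onwall argument from the local Lipschitz bound.

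There is no serious obstacle here — the work has already been front-loaded into the mapping-property lemmas of Section~2. The only mild subtlety worth a sentence is the need to know that the Schr\"odinger group preserves the higher Sobolev space $X_{s+2}$ with norm exactly equal to $1$ (so that no constant enters the contraction estimate beyond those coming from $F$); this is immediate from functional calculus since $(1-\Delta)^{(s+2)/2}$ and $e^{it\Delta}$ commute. A second routine point is measurability/continuity of $\tau \mapsto e^{i(t-\tau)\Delta}F(U(\tau))$, which follows from strong continuity of the group and continuity of $F$. If one wants the statement for complex time-reversal or the backward interval, the same argument applies verbatim since $e^{it\Delta}$ is a group, not merely a semigroup.
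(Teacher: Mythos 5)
Your proposal is correct and follows essentially the same route as the paper: recognize $i\Delta$ as the generator of a unitary $C_0$-group on $X_{s+2}$, verify that $U \mapsto -iU\Delta^{-1}(|U|^2)$ maps $X_{s+2}$ into itself and is Lipschitz on bounded sets via the Section~2 mapping lemmas (the paper quotes Lemma~\ref{lm: Y_s basics}~(ii)--(iii), which repackage the Lemma~\ref{lm: u times delta^(-1)(w)} and Lemma~\ref{lm: X_s basics}~(ii) estimates you use directly), and then invoke Theorem~6.1.4 of~\cite{Pazy}. The extra details you supply (the cubic bound, the explicit difference estimate, isometry of the group) are just the unwinding of those cited lemmas.
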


\begin{proof}
Clearly, the linear part $i\Delta$ of the equation~\eqref{eq: approx eq, abstract} is the generator of a strongly continuous unitary group in $X_{s+2}$ by virtue of Lemma~\ref{lm: X_s basics}~(iii). 
Also, the nonlinear part $f$ of the equation~\eqref{eq: approx eq, abstract} given by
\begin{align*}
f(U) := -i U \Delta^{-1}(|U|^2) 
\end{align*}
is a map from $X_{s+2}$ into itself and Lipschitz continuous on bounded subsets by virtue of Lemma~\ref{lm: Y_s basics} (ii) and~(iii).
So, the standard existence and uniqueness result for mild solutions (Theorem~6.1.4 in~\cite{Pazy}) 
implies that there is a $T_0 > 0$ and a unique mild solution $U: I = [0,T_0] \to X_{s+2}$ of~\eqref{eq: approx eq, abstract}. In other words, the integral equation
\begin{align} \label{eq: def mild sol, approx eq}
U(t) = e^{i \Delta t}U_0 -i \int_0^t e^{i \Delta (t-r)} U(r) \Delta^{-1}(|U(r)|^2) \,dr 
\end{align}
has a unique solution $U \in C(I,X_{s+2})$. 
\end{proof}

In the situation of the above theorem, we also obtain classical solvability by Theorem~6.1.5 in \cite{Pazy}:
\begin{corollary} \label{cor: approx eq class sol}
If $s \in [0,\infty)$ and if $U_0 \in X_{s+2}$ and $U \in C(I,X_{s+2})$ are as in the above theorem, then $U$ belongs to $C^1(I,X_s)$ and is a classical solution of~\eqref{eq: approx eq, abstract}.
\end{corollary}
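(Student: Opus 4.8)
The plan is to invoke the classical-solution part of the semilinear evolution-equation theory of~\cite{Pazy}, but with a careful choice of the underlying Banach space. The crucial point is that one \emph{cannot} run the argument in $X_{s+2}$: there $i\Delta$ has domain $X_{s+4}$ (Lemma~\ref{lm: X_s basics}~(iii)), and $U_0 \in X_{s+2}$ lies two Sobolev derivatives below it. Instead I would regard~\eqref{eq: approx eq, abstract} as an equation in $X_s$, where $i\Delta$ generates a strongly continuous unitary group with domain $D(i\Delta) = X_{s+2}$, so that now $U_0 \in X_{s+2} = D(i\Delta)$. Since $X_{s+2}$ is continuously embedded in $X_s$, the solution $U \in C(I,X_{s+2})$ from Theorem~\ref{thm: approx eq mild sol} satisfies the same integral identity~\eqref{eq: def mild sol, approx eq} in $X_s$, and by uniqueness of mild solutions it is \emph{the} mild solution of~\eqref{eq: approx eq, abstract} in $C(I,X_s)$.

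Next I would check that the nonlinearity $f(U) = -iU\,\Delta^{-1}(|U|^2)$ is a continuously (in fact infinitely) differentiable self-map of $X_s$, since it is the composition of the bounded bilinear map $(U,U) \mapsto U\overline{U}$ into $X_s \cap L^1(\R^3)$ (Lemma~\ref{lm: X_s basics}~(ii)), the bounded linear map $\Delta^{-1}: X_{s-2} \cap L^1(\R^3) \to Y_{s-2}$ (Lemma~\ref{lm: Y_s basics}~(iii)), and the bounded bilinear map $(U,v) \mapsto Uv$ from $X_s \times Y_{s-2}$ into $X_s$ (Lemma~\ref{lm: Y_s basics}~(ii)); thus $f$ is a real polynomial of degree three in $(U,\overline{U})$. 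With $f \in C^1(X_s,X_s)$ and $U_0 \in D(i\Delta)$, Theorem~6.1.5 in~\cite{Pazy} then yields that the mild solution in $C(I,X_s)$, namely $U$, is a classical solution: $U \in C^1(I,X_s)$, $U(t) \in X_{s+2}$ for all $t$, and $U'(t) = i\Delta U(t) - iU(t)\,\Delta^{-1}(|U(t)|^2)$ in $X_s$ for every $t \in I$. Alternatively, one may bypass any differentiability of $f$ on $X_s$: since $U \in C(I,X_{s+2})$ and $f$ maps $X_{s+2}$ into $X_{s+2} = D(i\Delta)$, the function $t \mapsto f(U(t))$ lies in $C(I,D(i\Delta))$, so the associated linear inhomogeneous Cauchy problem has a classical solution by the nonhomogeneous-problem theory of~\cite{Pazy}, which again gives the claim.

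I do not anticipate a genuine obstacle; the statement is a routine regularity bootstrap. The only point requiring attention is precisely the choice of base space: working in $X_{s+2}$ would fail because $U_0$ is not in the domain $X_{s+4}$ of the generator there, whereas in $X_s$ the regularity $U \in C(I,X_{s+2})$ already established in Theorem~\ref{thm: approx eq mild sol} is exactly the input needed to promote the mild solution to a classical one.
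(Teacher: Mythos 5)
Your proposal is correct and matches the paper's proof, which consists entirely of invoking Theorem~6.1.5 of~\cite{Pazy} in exactly the setting you describe: the equation regarded in $X_s$, where $U_0 \in X_{s+2} = D(i\Delta)$. Your additional remarks -- in particular the alternative route via the inhomogeneous linear problem, which sidesteps checking differentiability of $f$ on $X_s$ (where your citations of Lemma~\ref{lm: X_s basics}~(ii) and Lemma~\ref{lm: Y_s basics}~(ii)--(iii) would literally require $s \ge 2$, since $Y_{s-2}$ is not defined otherwise) -- go beyond what the paper records, but the core argument is the same.
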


We now go on with the original equations~\eqref{polaron1} and~\eqref{polaron2} and 
show mild and classical solvability of the corresponding abstract initial value problem
\begin{align} \label{eq: original eq, abstract}
\begin{pmatrix}u\\v\\w\end{pmatrix}' =  \begin{pmatrix}i\Delta u\\ \eps^{-1}w\\ -\eps^{-1} v\end{pmatrix} + \begin{pmatrix}-i uv\\ 0\\ \eps^{-1} \Delta^{-1}(|u|^2)\end{pmatrix} ,
\quad \text{with} \quad
\begin{pmatrix}u\\v\\w\end{pmatrix}(0) = \begin{pmatrix}u_0\\v_0\\w_0\end{pmatrix} 
\end{align}
in the sense of \cite{Pazy}. 
In the following, we will always abbreviate 
\begin{align*}
\Lambda(v,w) := (w,-v) \quad \text{for} \quad (v,w) \in Y_s\times Y_s. 
\end{align*}

\begin{lemma} \label{lm: group uniformly bd in eps}
If $s\in [0,\infty)$, then $\eps^{-1} \Lambda$ is the generator of a continuous group $(e^{\eps^{-1} \Lambda t})_{t \in \R}$ in $Y_s \times Y_s$, which is uniformly bounded w.r.t.~$\eps \in (0,\infty)$, that is,
\begin{align*}
\sup_{\eps \in (0,\infty), t \in \R} \norm{ e^{\eps^{-1} \Lambda t} }_{Y_s \times Y_s, Y_s \times Y_s} < \infty.
\end{align*}
\end{lemma}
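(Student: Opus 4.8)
The plan is to diagonalize the skew-symmetric generator $\Lambda$ and reduce everything to the scalar fact that $e^{\pm i\eps^{-1}t}$ has modulus one. Concretely, on $Y_s\times Y_s$ consider the complexified picture: $\Lambda(v,w)=(w,-v)$ has eigenvalues $\pm i$ with eigenvectors $(1,\mp i)$, so the change of variables $z_{\pm}:=v\mp iw$ (equivalently the matrix $P=\frac{1}{\sqrt 2}\begin{pmatrix}1&1\\-i&i\end{pmatrix}$ or a similar fixed invertible matrix) conjugates $\Lambda$ to the diagonal operator $\operatorname{diag}(-i,i)$ acting componentwise on $Y_s\times Y_s$. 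Since $P$ and $P^{-1}$ are fixed $2\times2$ matrices with constant (hence $\eps$-independent) entries, the induced maps on $Y_s\times Y_s$ are bounded isomorphisms with $\eps$-independent norms, so it suffices to prove the claim for the diagonalized generator.

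The key step is then to observe that for the diagonalized problem the group is simply $(\xi,\eta)\mapsto (e^{-i\eps^{-1}t}\xi,\, e^{i\eps^{-1}t}\eta)$, where multiplication by the scalar $e^{\pm i\eps^{-1}t}$ is understood on the complexification of $Y_s$. First I would check that $Y_s$ (or rather its complexification) is genuinely a complex Banach space on which multiplication by a unimodular scalar is an isometry — this is clear from the definition $\norm{v}_{Y_s}=\norm{v}_{C_0}+\norm{\Delta v}_{X_s}+\norm{\Delta v}_{L^1}$, since each of the three seminorms is a norm on a complex space and is invariant under multiplication by $e^{i\theta}$. Hence each scalar factor acts as an isometry, the product group on $Y_s\times Y_s$ has norm $1$ for every $t$ and every $\eps$, and transporting back through $P,P^{-1}$ gives the uniform bound $\sup_{\eps>0,\,t\in\R}\norm{e^{\eps^{-1}\Lambda t}}_{Y_s\times Y_s,\,Y_s\times Y_s}\le \norm{P}\norm{P^{-1}}<\infty$.

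It remains to verify that $\eps^{-1}\Lambda$ really is the generator of this explicitly written group, i.e.\ that $t\mapsto e^{\eps^{-1}\Lambda t}$ is strongly continuous and that its generator has domain and action as claimed. Strong continuity reduces, via the diagonalization, to strong continuity of $t\mapsto e^{\pm i\eps^{-1}t}\cdot(\,\cdot\,)$ on the complex Banach space $Y_s$, which is immediate from $\norm{(e^{i\eps^{-1}t}-1)v}_{Y_s}=|e^{i\eps^{-1}t}-1|\,\norm{v}_{Y_s}\to0$; and the generator computation is the elementary limit $\frac1t(e^{i\eps^{-1}t}-1)\to i\eps^{-1}$, transported back by $P$, so $\Lambda$ is bounded and $e^{\eps^{-1}\Lambda t}=\sum_{k\ge0}(\eps^{-1}t)^k\Lambda^k/k!$ in fact converges in operator norm — no unbounded-operator subtleties arise. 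The only mildly delicate point, and the one I would be most careful about, is making the complexification of $Y_s$ and the identification of the real group $(e^{\eps^{-1}\Lambda t})_{t\in\R}$ on $Y_s\times Y_s$ with the diagonal scalar group precise; once that bookkeeping is set up, every estimate is a one-line consequence of $|e^{i\theta}|=1$.
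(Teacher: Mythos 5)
Your argument is correct and rests on the same core fact as the paper's proof, namely that $\Lambda$ is a bounded operator given by a constant $2\times 2$ matrix so that $e^{\eps^{-1}\Lambda t}$ can be written down explicitly and its entries bounded uniformly in $\eps$ and $t$. The paper simply sums the exponential series to get the real rotation matrix with entries $\cos(\eps^{-1}t)$, $\sin(\eps^{-1}t)$, which yields the uniform bound immediately and avoids the complexification/diagonalization bookkeeping that you correctly identify as the only delicate point of your variant.
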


\begin{proof}
Since $\Lambda$ 
is a bounded operator in $Y_s \times Y_s$ 
having the matrix representation
\begin{align*}
\Lambda = \begin{pmatrix}0 & 1 \\ -1 & 0 \end{pmatrix},
\end{align*}
we have the explicit representation formula
\begin{align*}
e^{\eps^{-1} \Lambda t} = \sum_{n=0}^{\infty} \frac{\big( \eps^{-1} \Lambda t \big)^n}{n!} = \begin{pmatrix}
\cos(\eps^{-1}t)  & \sin(\eps^{-1}t) \\
-\sin(\eps^{-1}t) & \cos(\eps^{-1}t)
\end{pmatrix}
\end{align*}
from which the assertion is obvious. 
\end{proof}

\begin{theorem} \label{thm: original eq mild sol}
If $s \in [2,\infty)$ and $\eps > 0$, then for every $(u_0,v_0,w_0) = (u_{0 \eps},v_{0 \eps},w_{0 \eps}) \in X_s \times Y_{s-2} \times Y_{s-2}$ there exists a unique maximal mild solution $(u,v,w) = (u_{\eps},v_{\eps},w_{\eps}) \in C(I_{\eps}, X_s \times Y_{s-2} \times Y_{s-2})$ of~\eqref{eq: original eq, abstract} with $I_{\eps} \subset I$, where $I = [0,T_0]$ is the interval from the previous theorem.  
\end{theorem}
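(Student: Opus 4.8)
The plan is to recast \eqref{eq: original eq, abstract} as a semilinear abstract evolution equation $\zeta' = A\zeta + F(\zeta)$ on the Banach space $Z_s := X_s \times Y_{s-2} \times Y_{s-2}$ (equipped with the product norm), where $A$ is the linear part $\mathrm{diag}(i\Delta,\eps^{-1}\Lambda)$ and $F(u,v,w) := (-iuv,\,0,\,\eps^{-1}\Delta^{-1}(|u|^2))$ is the nonlinear part, and then to invoke the standard local existence and uniqueness theory for mild solutions, e.g. Theorem~6.1.4 in~\cite{Pazy} together with its continuation (blow-up) alternative.

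First I would check that $A$ generates a strongly continuous group on $Z_s$. By Lemma~\ref{lm: X_s basics}~(iii) the operator $\Delta$ is self-adjoint on $X_s$, so $i\Delta$ generates a unitary group there (Stone's theorem); by Lemma~\ref{lm: group uniformly bd in eps}, applied with index $s-2 \ge 0$, the operator $\eps^{-1}\Lambda$ generates a strongly continuous group on $Y_{s-2}\times Y_{s-2}$. The direct sum of these two groups is the group generated by $A$ on $Z_s$.

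Next I would verify that $F$ maps $Z_s$ into itself and is Lipschitz continuous on bounded subsets. The middle component is trivial. For the first component, Lemma~\ref{lm: Y_s basics}~(ii) gives $\norm{uv}_{X_s} \le C\norm{u}_{X_s}\norm{v}_{Y_{s-2}}$, so $(u,v)\mapsto -iuv$ is a bounded bilinear map $X_s\times Y_{s-2}\to X_s$ and hence Lipschitz on bounded sets. For the third component, I would write $|u|^2 = u\overline{u}$ with $\overline{u}\in X_s$ of the same norm as $u$; by Lemma~\ref{lm: X_s basics}~(ii) and the embedding $X_s\hookrightarrow X_{s-2}$ of Lemma~\ref{lm: X_s basics}~(i) one gets $|u|^2 \in X_{s-2}\cap L^1(\R^3)$ with $\norm{|u|^2}_{X_{s-2}\cap L^1}\le C\norm{u}_{X_s}^2$, and the identity $|u|^2-|u'|^2 = u(\overline{u}-\overline{u'})+(u-u')\overline{u'}$ shows $u\mapsto |u|^2$ is Lipschitz on bounded subsets of $X_s$ into $X_{s-2}\cap L^1(\R^3)$. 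Composing with the bounded linear operator $\Delta^{-1}\colon X_{s-2}\cap L^1(\R^3)\to Y_{s-2}$ from Lemma~\ref{lm: Y_s basics}~(iii) (again with index $s-2$) and the scalar $\eps^{-1}$, the third component of $F$ is Lipschitz on bounded subsets as a map $X_s\to Y_{s-2}$. Altogether $F\colon Z_s\to Z_s$ is locally Lipschitz, and Theorem~6.1.4 in~\cite{Pazy} then yields, for each datum in $Z_s$, a unique mild solution on a short interval which extends by the usual continuation argument to a unique maximal one; intersecting its interval of existence with $I=[0,T_0]$ gives the asserted $I_\eps\subset I$.

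The only point to keep in mind is that the Lipschitz constant of the third component of $F$ is of order $\eps^{-1}$, since it carries the factor $\eps^{-1}$ in front of $\Delta^{-1}(|u|^2)$; consequently this crude fixed-point argument by itself only produces an interval $I_\eps$ whose length may a priori shrink as $\eps\to 0$ — precisely the difficulty noted in Remark~\ref{remark2}. Obtaining a lower bound on the length of $I_\eps$ that is uniform in $\eps$, and ultimately $I_\eps = I$ for small $\eps$, is not part of this theorem and is deferred to the refined estimates of the later sections. Thus the only real work here is the bookkeeping of the nonlinearity in the $X_s$, $Y_{s-2}$, $L^1$ scale; there is no genuine obstacle.
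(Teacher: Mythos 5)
Your proposal is correct and follows essentially the same route as the paper: the linear part generates a strongly continuous group by Lemma~\ref{lm: X_s basics}~(iii) and Lemma~\ref{lm: group uniformly bd in eps}, the nonlinearity is locally Lipschitz on $X_s \times Y_{s-2} \times Y_{s-2}$ by Lemma~\ref{lm: Y_s basics}~(ii)--(iii) and Lemma~\ref{lm: X_s basics}~(ii), and Theorem~6.1.4 in~\cite{Pazy} then gives the unique maximal mild solution with $I_\eps \subset I$. Your closing remark that the $\eps^{-1}$ in the Lipschitz constant prevents any $\eps$-uniform lower bound on $|I_\eps|$ at this stage, with $I_\eps = I$ deferred to the later Gronwall argument, matches the paper's intent exactly.
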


\begin{proof}
Clearly, the linear part $A = A_{\eps} = \operatorname{diag}(i\Delta,\, \eps^{-1}\Lambda)$ of the equation~\eqref{eq: original eq, abstract} given by 
\begin{gather*}
A(u,v,w) := (i\Delta u, \, \eps^{-1} \Lambda (v,w)) = (i\Delta u, \, \eps^{-1} w, \, -\eps^{-1} v) \\ ((u,v,w) \in X_{s+2} \times Y_{s-2} \times Y_{s-2})
\end{gather*}
is the generator of a strongly continuous group $(e^{A_{\eps}t})_{t \in \R} = (\operatorname{diag}(e^{i\Delta t},\, e^{\eps^{-1}\Lambda t}))_{t \in \R}$
in $X_s \times Y_{s-2} \times Y_{s-2}$ by virtue of Lemma~\ref{lm: X_s basics}~(iii) and Lemma \ref{lm: group uniformly bd in eps}. 
Also, the nonlinear part $f = f_{\eps}$ of the equation~\eqref{eq: original eq, abstract} given by 
\begin{align*}
f(u,v,w) := (-iuv,0,\eps^{-1} \Delta^{-1}(|u|^2)) 
\end{align*}
is a map from $X_s \times Y_{s-2} \times Y_{s-2}$ into itself and Lipschitz on bounded subsets by virtue of Lemma~\ref{lm: Y_s basics}~(ii)-(iii) and Lemma~\ref{lm: X_s basics}~(ii). 
So, for $I = [0,T_0]$ as in Theorem~\ref{thm: approx eq mild sol}, the standard existence and uniqueness result for mild solutions (Theorem~6.1.4 in~\cite{Pazy}) 
implies that there is a unique maximal mild solution $(u_{\eps},v_{\eps},w_{\eps}): I_{\eps} \to X_s \times Y_{s-2} \times Y_{s-2}$ of~\eqref{eq: original eq, abstract} with $I_{\eps} \subset I$. In other words, the integral equation
\begin{align} \label{eq: def mild sol, original eq}
\begin{pmatrix}u\\v\\w\end{pmatrix}(t) 
= e^{A_{\eps}t} \begin{pmatrix}u_0\\v_0\\w_0\end{pmatrix} + \int_0^t e^{A_{\eps}(t-r)} \begin{pmatrix}-i u(r)v(r)\\ 0\\ \eps^{-1} \Delta^{-1}(|u(r)|^2)\end{pmatrix} \,dr
\end{align}
has a unique maximal solution $(u,v,w) = (u_{\eps},v_{\eps},w_{\eps}) \in C(I_{\eps}, X_s \times Y_{s-2} \times Y_{s-2})$ with $I_{\eps} \subset I$.
\end{proof}

In the situation of the above theorem, we also obtain classical solvability by Theorem~6.1.5 in \cite{Pazy}:

\begin{corollary} \label{cor: original eq class sol}
If $s \in [2,\infty)$ and $\eps >0$ and if $(u_0,v_0,w_0)$ and $(u,v,w) = (u_{\eps},v_{\eps},w_{\eps})  \in C(I_{\eps}, X_s \times Y_{s-2} \times Y_{s-2})$ are as in the above theorem, then 
$(u,v,w)$ belongs to $C^1(I_{\eps}, X_{s-2} \times Y_{s-2} \times Y_{s-2})$ and is a classical solution of~\eqref{eq: original eq, abstract}.
\end{corollary}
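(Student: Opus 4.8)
The plan is to invoke the smooth-data regularity result for semilinear evolution equations (Theorem~6.1.5 in~\cite{Pazy}), exactly as was done for the approximation equation in Corollary~\ref{cor: approx eq class sol}. The statement we want follows once we check the hypotheses of that theorem in the present setting: the linear operator $A_{\eps} = \operatorname{diag}(i\Delta, \eps^{-1}\Lambda)$ generates a $C_0$-group on $Z := X_s \times Y_{s-2} \times Y_{s-2}$ (established in the proof of Theorem~\ref{thm: original eq mild sol} via Lemma~\ref{lm: X_s basics}~(iii) and Lemma~\ref{lm: group uniformly bd in eps}), the nonlinearity $f_{\eps}$ is continuously differentiable (in particular locally Lipschitz) from $Z$ to $Z$, and the mild solution $(u,v,w)$ has initial value in the relevant smaller space where $A_{\eps}$ acts.

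First I would pin down the smoothness of $f_{\eps}(u,v,w) = (-iuv, 0, \eps^{-1}\Delta^{-1}(|u|^2))$ as a map on $Z$. The middle component is trivial. The first component $(u,v)\mapsto -iuv$ is bilinear and bounded from $X_s \times Y_{s-2}$ to $X_s$ by Lemma~\ref{lm: Y_s basics}~(ii), hence smooth. The third component $u \mapsto \eps^{-1}\Delta^{-1}(|u|^2)$ is the composition of the smooth (real-)quadratic map $u \mapsto |u|^2$ from $X_s$ to $X_s \cap L^1(\R^3)$ — bounded quadratic by Lemma~\ref{lm: X_s basics}~(ii) — with the bounded linear operator $\Delta^{-1}: X_s \cap L^1(\R^3) \to Y_s \hookrightarrow Y_{s-2}$ of Lemma~\ref{lm: Y_s basics}~(iii) and Lemma~\ref{lm: Y_s basics}~(i); hence it too is $C^\infty$, a fortiori $C^1$. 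Thus $f_{\eps} \in C^1(Z, Z)$.

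Next I would observe that the mild solution produced in Theorem~\ref{thm: original eq mild sol} has initial datum $(u_0, v_0, w_0) \in X_s \times Y_{s-2} \times Y_{s-2}$, which lies in $\dom(A_{\eps}) = X_{s+2} \times Y_{s-2} \times Y_{s-2}$ precisely when $u_0 \in X_{s+2}$; but in fact Theorem~6.1.5 in~\cite{Pazy} requires only that the \emph{mild solution} be such that $t \mapsto f_{\eps}((u,v,w)(t))$ is, say, continuously differentiable along the orbit, which follows from $f_{\eps} \in C^1$ together with continuity of the orbit — the standard bootstrap — without any extra assumption on $u_0$ beyond membership in $Z$. (Here one uses that on $Z$ the group $e^{A_{\eps}t}$ is strongly continuous and $f_{\eps}$ maps $Z$ into $Z$, so the Pazy regularity theorem applies verbatim to conclude $(u,v,w) \in C^1(I_{\eps}, Z)$ with $A_{\eps}(u,v,w)(t) \in Z$ for each $t$, i.e.\ $(u,v,w)(t) \in \dom(A_{\eps})$ and $(u,v,w)$ solves~\eqref{eq: original eq, abstract} classically in $X_{s-2} \times Y_{s-2} \times Y_{s-2}$, since $\Delta u(t) \in X_{s-2}$ by Lemma~\ref{lm: X_s basics}~(iii).)

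The only mild obstacle I anticipate is bookkeeping the ambient spaces: $A_{\eps}(u,v,w)(t) = (i\Delta u(t), \eps^{-1}w(t), -\eps^{-1}v(t))$ naturally lives in $X_{s-2} \times Y_{s-2} \times Y_{s-2}$ rather than in $Z = X_s \times \cdots$, so one must phrase the classical-solution statement in the larger space, exactly as in the corollary. There is nothing $\eps$-uniform being claimed here, so the $\eps^{-1}$ factors are harmless; they only enter the (finite, $\eps$-dependent) local existence time, which is already accounted for by $I_{\eps}$. Apart from this, the argument is a direct transcription of Corollary~\ref{cor: approx eq class sol}.
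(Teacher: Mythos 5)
Your strategy is the paper's own (a citation of Theorem~6.1.5 in~\cite{Pazy}), and your verification that $f_{\eps}$ is smooth from $Z:=X_s\times Y_{s-2}\times Y_{s-2}$ into itself is fine. The problem is the way you dispose of the hypothesis that the initial value lie in the domain of the generator: that hypothesis is genuinely needed in Pazy's theorem, and your argument for dropping it does not work. With $f\equiv 0$ your reading would assert that $t\mapsto e^{i\Delta t}u_0$ is $C^1$ with values in $X_s$ for every $u_0\in X_s$, which is false since the Schr\"odinger group does not smooth. Likewise the ``bootstrap'' you invoke is circular: composing the $C^1$ map $f_{\eps}$ with the merely continuous orbit gives only continuity of $t\mapsto f_{\eps}((u,v,w)(t))$, not continuous differentiability -- differentiability of this composition is exactly what requires the (yet unproved) differentiability of the orbit. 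Consequently your parenthetical conclusion that $(u,v,w)\in C^1(I_{\eps},Z)$ with $(u,v,w)(t)\in\dom(A_{\eps})=X_{s+2}\times Y_{s-2}\times Y_{s-2}$ is in general false; note that the corollary deliberately claims $C^1$ only into the \emph{larger} space $X_{s-2}\times Y_{s-2}\times Y_{s-2}$.

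The missing idea is to lower the ambient space instead of weakening the domain hypothesis: regard~\eqref{eq: original eq, abstract} as an evolution equation in $\widetilde{Z}:=X_{s-2}\times Y_{s-2}\times Y_{s-2}$, where $\widetilde{A}_{\eps}=\operatorname{diag}(i\Delta,\eps^{-1}\Lambda)$ has domain $X_s\times Y_{s-2}\times Y_{s-2}$ (graph norm equivalent to the $Z$-norm), so that the given initial value $(u_0,v_0,w_0)$ automatically lies in $\dom(\widetilde{A}_{\eps})$. One cannot, however, apply Theorem~6.1.5 verbatim in $\widetilde{Z}$ either, because $f_{\eps}$ need not be locally Lipschitz there: for $s=2$, for instance, $u\mapsto\Delta^{-1}(|u|^2)$ would require $|u|^2\in L^2\cap L^1$ for general $u\in X_0=L^2$, which fails. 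The clean repair is to treat $t\mapsto f_{\eps}\big((u_{\eps},v_{\eps},w_{\eps})(t)\big)$ as a known inhomogeneity: by the mapping properties already used in Theorem~\ref{thm: original eq mild sol} it is continuous with values in $Z=\dom(\widetilde{A}_{\eps})$, so the regularity theory for the inhomogeneous \emph{linear} equation in Section~4.2 of~\cite{Pazy} (mild solution with initial value in the domain and domain-valued continuous inhomogeneity is classical) yields that $(u,v,w)$ is a classical solution in $\widetilde{Z}$, i.e.\ precisely the statement of the corollary. In short: same citation as the paper, but your justification for ignoring the domain condition is a real gap, and the fix is to apply the regularity result two derivatives lower, where the condition holds for free.
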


For the subsequent estimates, we additionally have to control the second-order time derivatives.
\begin{lemma} \label{lm: reg of v, V}
Suppose $U \in C(I,X_{s+2})$ is as in Theorem~\ref{thm: approx eq mild sol} with $s \in [2,\infty)$ and $V(t) := \Delta^{-1}(|U(t)|^2)$ for $t \in I$. Suppose further $(u,v,w) = (u_{\eps},v_{\eps},w_{\eps}) \in C(I_{\eps}, X_s \times Y_{s-2} \times Y_{s-2})$ is as in Theorem~\ref{thm: original eq mild sol}  with the same $s \in [2,\infty)$ as above. Then $V \in C^2(I,Y_{s-2})$ and $v \in C^2(I_{\eps},Y_{s-2})$.
\end{lemma}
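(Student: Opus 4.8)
The plan is to reduce both assertions to the product estimates of Section~\ref{sec: some preliminaries} together with the elementary fact that a bounded (bi)linear map between Banach spaces, composed with a $C^1$ curve, is again $C^1$ and satisfies the Leibniz/chain rule. The statement for $v$ is essentially immediate: by Corollary~\ref{cor: original eq class sol} the triple $(u,v,w)$ is a classical solution of~\eqref{eq: original eq, abstract} lying in $C^1(I_\eps, X_{s-2}\times Y_{s-2}\times Y_{s-2})$, so in particular $w \in C^1(I_\eps, Y_{s-2})$, and reading off the middle component of~\eqref{eq: original eq, abstract} gives $v' = \eps^{-1} w$ on $I_\eps$. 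Hence $v' \in C^1(I_\eps, Y_{s-2})$, that is, $v \in C^2(I_\eps, Y_{s-2})$ (with $v'' = \eps^{-2}(\Delta^{-1}(|u|^2) - v)$, should the explicit formula be wanted).

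For $V$ it suffices to prove that $t \mapsto |U(t)|^2$ lies in $C^2(I, X_{s-2}\cap L^1)$, since composing with the bounded linear operator $\Delta^{-1}\colon X_{s-2}\cap L^1 \to Y_{s-2}$ of Lemma~\ref{lm: Y_s basics}(iii) then yields $V \in C^2(I, Y_{s-2})$. By Corollary~\ref{cor: approx eq class sol} we already have $U \in C^1(I, X_s)$ with $U' = i\Delta U - iU\,\Delta^{-1}(|U|^2)$ in $X_s$, and I first bootstrap one additional time-derivative for $U$ out of this equation. Indeed, feeding $U \in C^1(I, X_s)$ through the bounded bilinear multiplication $X_s \times X_s \to X_s \cap L^1$ of Lemma~\ref{lm: X_s basics}(ii) gives $|U|^2 = U\overline U \in C^1(I, X_s\cap L^1)$, with derivative $\overline{U'}U + \overline U U'$; composing with $\Delta^{-1}$ gives $\Delta^{-1}(|U|^2) \in C^1(I, Y_{s-2})$, and then the bounded bilinear multiplication $X_s \times Y_{s-2} \to X_s$ of Lemma~\ref{lm: Y_s basics}(ii) gives $U\,\Delta^{-1}(|U|^2) \in C^1(I, X_s)$. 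Since, in addition, $\Delta\colon X_s \to X_{s-2}$ is bounded by Lemma~\ref{lm: X_s basics}(iii) and $U \in C^1(I, X_s)$, the entire right-hand side of the equation is a $C^1$ map with values in $X_{s-2}$; hence $U \in C^2(I, X_{s-2})$.

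With $U'' \in C(I, X_{s-2})$ in hand, one further differentiation of $|U|^2$ gives, formally, $\frac{d^2}{dt^2}|U|^2 = \overline{U''}U + 2\,\overline{U'}U' + \overline U U''$. The middle term lies in $C(I, X_{s-2}\cap L^1)$ by Lemma~\ref{lm: X_s basics}(ii) because $U' \in C(I, X_s)$, and the outer two terms lie there by Lemma~\ref{lm: X_s-2 X_s in X_s-2}, which is exactly tailored to a product of a factor in $X_{s-2}$ (here $U''$) with a factor in $X_s$ (here $U$, using $U \in C(I, X_{s+2}) \subset C(I, X_s)$). Hence $|U|^2 \in C^2(I, X_{s-2}\cap L^1)$ and therefore $V \in C^2(I, Y_{s-2})$. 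The one step that is not pure bookkeeping is the bootstrap in the previous paragraph, where an extra time-derivative of $U$ is purchased by differentiating the Choquard equation itself at the cost of two orders of Sobolev regularity; note also that every ``formal'' Leibniz computation above is justified by precisely the bilinearity-and-boundedness lemma that is simultaneously invoked to verify continuity of the resulting derivative, so that no separate density argument is needed.
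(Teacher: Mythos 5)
Your proof is correct and follows essentially the same route as the paper: for $v$ you read off $v'=\eps^{-1}w$ from the classical solution of~\eqref{eq: original eq, abstract} (Corollary~\ref{cor: original eq class sol}), and for $V$ you bootstrap $U\in C^2(I,X_{s-2})$ by differentiating~\eqref{eq: approx eq, abstract} with the product estimates of Lemmas~\ref{lm: X_s basics}, \ref{lm: X_s-2 X_s in X_s-2} and \ref{lm: Y_s basics}, then apply $\Delta^{-1}$. The only difference is that you spell out the Leibniz computations which the paper leaves implicit.
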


\begin{proof}
With the help of Lemma~\ref{lm: X_s basics}~(ii) and (iii) and of Lemma~\ref{lm: Y_s basics}~(ii) and~(iii), it follows from Corollary~\ref{cor: approx eq class sol} and~\eqref{eq: approx eq, abstract} that $U' \in C^1(I,X_{s-2})$ and hence $U \in C^1(I,X_s) \cap C^2(I,X_{s-2})$. We easily conclude from this by Lemma~\ref{lm: X_s-2 X_s in X_s-2} that $|U|^2 \in C^2(I,X_{s-2})$ and therefore $V = \Delta^{-1}(|U|^2)$ belongs to $C^2(I,Y_{s-2})$ by Lemma~\ref{lm: Y_s basics}~(iii). 
That $v$ belongs to $C^2(I_{\eps},Y_{s-2})$ is an immediate consequence of Corollary~\ref{cor: original eq class sol} and~\eqref{eq: original eq, abstract}. 
\end{proof}

\section{Approximation error and approximation result}
\label{sec: approx error and approx result}

In this section, we are going to bound  the approximation error, that is the difference between the solutions $u = u_{\eps}$, $v = v_{\eps}$ of the original equations~\eqref{polaron1}-\eqref{polaron2} and the solutions $U$, $V$ of the approximate equations~\eqref{eq: hs1}-\eqref{eq: hs2}. We show that this difference -- measured in the right norm -- remains of order $\eps$ for all times $t \in [0,T_0]$ provided it was of order $\eps$ at the initial time $0$, thus establishing the desired approximation result.

\subsection{Integral equations and estimates for the error}

We first derive integral equations for the scaled approximation errors
\begin{gather*}
R_u(t) = R_{u,\eps}(t) := \eps^{-1} (u(t)-U(t)), \qquad R_v(t) = R_{v,\eps}(t) := \eps^{-1} (v(t)-V(t)), \\
R_w(t) = R_{w,\eps}(t) := \eps R_v'(t) = v'(t)-V'(t),
\end{gather*}
where $U \in C(I,X_{s+2})$ and $(u,v,w) = (u_{\eps},v_{\eps},w_{\eps})  \in C(I_{\eps}, X_s \times Y_{s-2} \times Y_{s-2})$ are mild solutions of~\eqref{eq: approx eq, abstract} and~\eqref{eq: original eq, abstract} with $s \in [2,\infty)$ and where $V = \Delta^{-1}(|U|^2)$.
With the help of Corollary~\ref{cor: approx eq class sol} and~\ref{cor: original eq class sol} and Lemma~\ref{lm: reg of v, V}, we obtain 
\begin{align} \label{eq: int eq R_u}
R_u(t) = e^{i\Delta t}R_u(0) -i \int_0^t e^{i\Delta (t-r)} f_u(r)\,dr
\end{align}
for all $t \in I_{\eps}$, where 
\begin{align*}
f_u(r) = f_{u,\eps}(r) := R_u(r)V(r) + U(r)R_v(r) + \eps R_u(r)R_v(r).
\end{align*}
and
\begin{align} \label{eq: int eq R_v,R_w}
\begin{pmatrix}R_v(t)\\R_w(t)\end{pmatrix} 
= e^{\eps^{-1} \Lambda t} \begin{pmatrix}R_v(0)\\R_w(0)\end{pmatrix}  +  \int_0^t e^{\eps^{-1} \Lambda (t-r)} \begin{pmatrix}0\\\eps^{-1}f_v(r)-V''(r)\end{pmatrix}  \,dr
\end{align}
for all $t \in I_{\eps}$, where 
\begin{align*}
f_v(r) = f_{v,\eps}(r) := \Delta^{-1}\big( R_u(r)\ol{U(r)} + \ol{R_u(r)}U(r) + \eps |R_u(r)|^2 \big).
\end{align*}

We now derive from the integral equations~\eqref{eq: int eq R_u} and~\eqref{eq: int eq R_v,R_w} integral inequalities which are implicit in the sense that the scaled approximation errors $R_u$ and $(R_v,R_w)$ -- measured in the norm of $X_s$ and $Y_{s-2} \times Y_{s-2}$ respectively -- 
show up on both sides of the inequalities. In order to get rid of the dangerous $\eps^{-1}$ in front of $f_v$ in~\eqref{eq: int eq R_v,R_w} we perform an integration by parts. 
%

\begin{proposition} \label{prop: int ineq}
Set
\begin{align*}
S_{u,\eps}(t) := \sup_{r \in [0,t]} \norm{R_u(r)}_{X_s}, 
\qquad S_{(v,w),\eps}(t) := \sup_{r \in [0,t]} \norm{(R_v(r),R_w(r))}_{Y_{s-2} \times Y_{s-2}},
\end{align*}
and $S_{\eps}(t) := S_{u,\eps}(t) + S_{(v,w),\eps}(t)$ for $t \in I_{\eps}$. Then there is a constant $C = C_s$ such that for all $\eps \in (0,\infty)$ and all $t \in I_{\eps}$
\begin{eqnarray*}
S_{u,\eps}(t) & \le & C \Big( S_{\eps}(0) + \int_0^t S_{\eps}(r) + \eps S_{\eps}(r)^2 \,dr \Big),
\\
S_{(v,w),\eps}(t) & \le & C \Big( S_{\eps}(0) + 1 + S_{u,\eps}(t) + \eps S_{u,\eps}(t)^2   
+ \int_0^t S_{\eps}(r) + \eps S_{\eps}(r)^2 + \eps^2 S_{\eps}(r)^3 \, dr \Big).
\end{eqnarray*}
\end{proposition}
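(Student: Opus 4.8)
We apply $\norm{\cdot}_{X_s}$ to \eqref{eq: int eq R_u} and $\norm{\cdot}_{Y_{s-2}\times Y_{s-2}}$ to \eqref{eq: int eq R_v,R_w}, reduce every term via the product estimates of Section~\ref{sec: some preliminaries}, and exploit that $(e^{i\Delta t})_{t\in\R}$ is unitary on $X_s$ (Lemma~\ref{lm: X_s basics}(iii)) and $(e^{\eps^{-1}\Lambda t})_{t\in\R}$ is uniformly (in $\eps$) bounded on $Y_{s-2}\times Y_{s-2}$ (Lemma~\ref{lm: group uniformly bd in eps}). Since $\sup_{r\in I}\norm{U(r)}_{X_{s+2}}$, $\sup_{r\in I}\norm{V(r)}_{Y_{s-2}}$ and $\sup_{r\in I}\norm{V''(r)}_{Y_{s-2}}$ are finite constants -- by Theorem~\ref{thm: approx eq mild sol}, Corollary~\ref{cor: approx eq class sol}, Lemma~\ref{lm: Y_s basics}(iii) and Lemma~\ref{lm: reg of v, V} -- any product having $U$, $V$ or $V''$ as one of its factors may be estimated by a constant times the norm of the other factor.

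\emph{Estimate for $R_u$.} Taking $\norm{\cdot}_{X_s}$ in \eqref{eq: int eq R_u} and using unitarity gives $\norm{R_u(t)}_{X_s}\le\norm{R_u(0)}_{X_s}+\int_0^t\norm{f_u(r)}_{X_s}\,dr$. By Lemma~\ref{lm: Y_s basics}(ii), $\norm{R_u(r)V(r)}_{X_s}+\norm{U(r)R_v(r)}_{X_s}\le C\,S_\eps(r)$ and $\norm{\eps R_u(r)R_v(r)}_{X_s}\le C\eps\,S_\eps(r)^2$, so $\norm{f_u(r)}_{X_s}\le C(S_\eps(r)+\eps S_\eps(r)^2)$. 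As $\norm{R_u(0)}_{X_s}\le S_\eps(0)$ and the right-hand side is non-decreasing in $t$, taking $\sup_{[0,t]}$ yields the first inequality.

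\emph{Estimate for $(R_v,R_w)$, and the key integration by parts.} Estimating the $\eps^{-1}f_v$-term in \eqref{eq: int eq R_v,R_w} directly would cost a factor $\eps^{-1}$ (Remark~\ref{remark2}); we remove it by one integration by parts in $r$. Since $\Lambda$ is bounded and boundedly invertible on $Y_{s-2}\times Y_{s-2}$ with $\Lambda^{-1}=-\Lambda$, we have $\eps^{-1}e^{\eps^{-1}\Lambda(t-r)}=-\Lambda^{-1}\frac{d}{dr}e^{\eps^{-1}\Lambda(t-r)}$, and $r\mapsto f_v(r)$ is $C^1$ into $Y_{s-2}$: indeed $R_u\in C^1(I_\eps,X_{s-2})\cap C(I_\eps,X_s)$ and $U\in C^1(I,X_s)$ by Corollaries~\ref{cor: approx eq class sol} and~\ref{cor: original eq class sol}, the relevant products lie in $X_{s-2}\cap L^1$ by Lemmas~\ref{lm: X_s basics}(ii) and~\ref{lm: X_s-2 X_s in X_s-2}, and $\Delta^{-1}:X_{s-2}\cap L^1\to Y_{s-2}$ is bounded (Lemma~\ref{lm: Y_s basics}(iii)). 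Integration by parts then turns the $\eps^{-1}f_v$-contribution into $-\Lambda^{-1}(0,f_v(t))+\Lambda^{-1}e^{\eps^{-1}\Lambda t}(0,f_v(0))+\Lambda^{-1}\int_0^t e^{\eps^{-1}\Lambda(t-r)}(0,f_v'(r))\,dr$ -- no $\eps^{-1}$ left. With $\norm{f_v(r)}_{Y_{s-2}}\le C\,\norm{R_u(r)\ol{U(r)}+\ol{R_u(r)}U(r)+\eps|R_u(r)|^2}_{X_{s-2}\cap L^1}\le C(S_\eps(r)+\eps S_\eps(r)^2)$ (the same lemmas), the two boundary terms are $\le C(S_{u,\eps}(t)+\eps S_{u,\eps}(t)^2)$ and $\le C\,S_\eps(0)$ (up to a harmless $\eps S_\eps(0)^2$ coming from the $\eps|R_u(0)|^2$-part); together with $\int_0^t\norm{V''(r)}_{Y_{s-2}}\,dr\le C$ from the $-V''$-term this accounts for the ``$S_\eps(0)+1+S_{u,\eps}(t)+\eps S_{u,\eps}(t)^2$''.

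\emph{The main obstacle: bounding $\int_0^t\norm{f_v'(r)}_{Y_{s-2}}\,dr$.} This is the heart of the matter, because $f_v'$ contains $R_u'$, which satisfies $R_u'=i\Delta R_u-if_u$ (Corollaries~\ref{cor: approx eq class sol}, \ref{cor: original eq class sol}) and hence only lies in $X_{s-2}$ -- a loss of two derivatives relative to $R_u\in X_s$. The decisive point is that the $\Delta^{-1}$ standing in front of $f_v$ recovers exactly these two derivatives: differentiating $f_v=\Delta^{-1}(R_u\ol U+\ol{R_u}U+\eps|R_u|^2)$ by the product rule, substituting $R_u'=i\Delta R_u-if_u$ and $U'=i\Delta U-iUV$, and applying $\Delta^{-1}:X_{s-2}\cap L^1\to Y_{s-2}$ to each summand, every term is controlled by Lemma~\ref{lm: X_s-2 X_s in X_s-2} together with $\norm{\Delta R_u}_{X_{s-2}}\le\norm{R_u}_{X_s}$, $\norm{\Delta U}_{X_s}\le\norm{U}_{X_{s+2}}$, and the bound $\norm{f_u(r)}_{X_s}\le C(S_\eps(r)+\eps S_\eps(r)^2)$ from the first part. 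This gives $\norm{f_v'(r)}_{Y_{s-2}}\le C(S_\eps(r)+\eps S_\eps(r)^2+\eps^2 S_\eps(r)^3)$, the cubic term stemming precisely from the product $\eps\,f_u(r)\,\ol{R_u(r)}$. Inserting this into the $f_v'$-integral, adding the contributions collected above, and taking $\sup_{[0,t]}$ (all bounds on the right being non-decreasing in $t$) yields the second inequality.
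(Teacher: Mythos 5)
Your proposal is correct and follows essentially the same route as the paper: estimate \eqref{eq: int eq R_u} directly using unitarity of $e^{i\Delta t}$ and Lemma~\ref{lm: Y_s basics}(ii), then integrate by parts in \eqref{eq: int eq R_v,R_w} via $\eps^{-1}e^{\eps^{-1}\Lambda(t-r)}=-\Lambda^{-1}\tfrac{d}{dr}e^{\eps^{-1}\Lambda(t-r)}$, substitute $R_u'=i\Delta R_u-if_u$ so that $\Delta^{-1}$ absorbs the derivative loss, and bound $f_v$, $f_v'$ with Lemmas~\ref{lm: X_s basics}, \ref{lm: X_s-2 X_s in X_s-2} and \ref{lm: Y_s basics} together with uniform boundedness of $e^{\eps^{-1}\Lambda t}$. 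The only (immaterial) deviations are that you additionally substitute $U'=i\Delta U-iUV$ where the paper simply uses $\sup_{r\in I}\norm{U'(r)}_{X_s}<\infty$, and your ``harmless $\eps S_\eps(0)^2$'' from the $r=0$ boundary term is indeed absorbed since it equals $\eps S_{u,\eps}(0)^2\le\eps S_{u,\eps}(t)^2$.
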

%
%
%
\begin{proof}
It follows from Lemma~\ref{lm: Y_s basics}~(ii) that $f_u$  belongs to $C(I_{\eps},X_s)$ and satisfies the estimate
\begin{gather} 
\norm{f_u(r)}_{X_s} \le C \Big( \norm{R_u(r)}_{X_s} \norm{V(r)}_{Y_{s-2}} + \norm{U(r)}_{X_s} \norm{R_v(r)}_{Y_{s-2}} \notag \\
+ \eps \norm{R_u(r)}_{X_s} \norm{R_v(r)}_{Y_{s-2}} \Big)
\label{eq: estimate f_u}
\end{gather}
for all $r \in I_{\eps}$. 
%
Since $\sup_{r \in I} \norm{U(r)}_{X_s} < \infty$ and $\sup_{r \in I} \norm{V(r)}_{Y_{s-2}} < \infty$, the asserted estimate for $S_{u,\eps}$ now follows from~\eqref{eq: int eq R_u} with the help of~\eqref{eq: estimate f_u} and Lemma~\ref{lm: X_s basics}~(iii). 
%
Since $R_u \in C(I_{\eps},X_s) \cap C^1(I_{\eps},X_{s-2})$ and $U \in C^1(I,X_s)$ by Corollary~\ref{cor: approx eq class sol} and~\ref{cor: original eq class sol}, it follows from Lemma~\ref{lm: Y_s basics}~(iii) and Lemma~\ref{lm: X_s-2 X_s in X_s-2} that $f_v$  belongs to $C^1(I_{\eps},Y_{s-2})$. We can therefore integrate by parts in~\eqref{eq: int eq R_v,R_w} and thus obtain
\begin{align} \label{eq: int eq R_v,R_w, after part int}
\begin{pmatrix}R_v(t)\\R_w(t)\end{pmatrix} 
&= e^{\eps^{-1} \Lambda t} \begin{pmatrix}R_v(0)\\R_w(0)\end{pmatrix} - \int_0^t e^{\eps^{-1} \Lambda (t-r)} \begin{pmatrix}0\\V''(r)\end{pmatrix}  \,dr  \notag \\
&\quad -e^{\eps^{-1} \Lambda (t-r)} \Lambda^{-1} \begin{pmatrix}0\\f_v(r)\end{pmatrix} \Big|_{r=0}^{r=t} + \int_0^t e^{\eps^{-1} \Lambda (t-r)} \Lambda^{-1} \begin{pmatrix}0\\f_v'(r)\end{pmatrix}  \,dr
\end{align}
for all $t \in I_{\eps}$, where 
\begin{align*}
f_v'(r) &= \Delta^{-1}\Big( R_u(r)\ol{U'(r)} + R_u'(r) \big(\ol{U(r)} + \eps \ol{R_u(r)} \big) \Big) + \text{c.c.} \notag \\
&= \Delta^{-1}\Big( R_u(r)\ol{U'(r)}\Big) + i\Delta^{-1}\Big( \big(\Delta R_u(r) - f_u(r)\big) \big(\ol{U(r)} + \eps \ol{R_u(r)}\big) \Big) + \text{c.c.}.
\end{align*}
In the above equation, the symbol c.c.~stands for the complex conjugate of the terms on the left of it.  
With the help of Lemma~\ref{lm: Y_s basics}~(iii) and Lemma~\ref{lm: X_s basics}~(ii), we can estimate 
\begin{align} 
\norm{f_v(r)}_{Y_{s-2}} &\le C \Big( \norm{ R_u(r)\ol{U(r)} }_{X_s \cap L^1} + \eps \norm{R_u(r) \ol{R_u(r)} }_{X_s \cap L^1} \Big) \notag \\
&\le C \Big( \norm{R_u(r)}_{X_s} \norm{U(r)}_{X_s} + \eps \norm{R_u(r)}_{X_s}^2 \Big)
\label{eq: estimate f_v}
\end{align}
for all $r \in I_{\eps}$, 
and with the help of Lemma~\ref{lm: Y_s basics}~(iii) and Lemmas~\ref{lm: X_s-2 X_s in X_s-2} and~\ref{lm: X_s basics}~(ii)-(iii), we can estimate
\begin{gather} 
\norm{f_v'(r)}_{Y_{s-2}} \le C \Big( \norm{ R_u(r)\ol{U'(r)} }_{X_s \cap L^1} + \norm{ (\Delta R_u(r)) (\ol{U(r)} + \eps \ol{R_u(r)}) }_{X_{s-2} \cap L^1} \notag \\ + \norm{ f_u(r) (\ol{U(r)} + \eps \ol{R_u(r)}) }_{X_s \cap L^1} \Big) \notag \\
\le C \Big( \norm{R_u(r)}_{X_s} \norm{U'(r)}_{X_s} + \norm{R_u(r)}_{X_s} \big( \norm{U(r)}_{X_s} + \eps \norm{R_u(r)}_{X_s} \big) \notag \\ 
+ \norm{f_u(r)}_{X_s} \big( \norm{U(r)}_{X_s} + \eps \norm{R_u(r)}_{X_s} \big) \Big) 
\label{eq: estimate f_v'}
\end{gather}
for all $r \in I_{\eps}$. 
%
Since $\sup_{r \in I} \norm{U(r)}_{X_s} + \norm{U'(r)}_{X_s} < \infty$ and $\sup_{r \in I} \norm{V(r)}_{Y_{s-2}} + \norm{V''(r)}_{Y_{s-2}} < \infty$, the asserted estimate for $S_{(v,w),\eps}$ now follows from~\eqref{eq: int eq R_v,R_w, after part int} with the help of~\eqref{eq: estimate f_v}, \eqref{eq: estimate f_v'}, \eqref{eq: estimate f_u} and Lemma~\ref{lm: group uniformly bd in eps}. 
\end{proof}

\subsection{Approximation result} \label{sec: approx result}

With the help of Gronwall's lemma, we finally turn the implicit estimates for the approximation error just established into explicit estimates and thus obtain our approximation theorem. 
Choosing $s = 2$, we obtain the version of the theorem stated in the introduction. 

\begin{theorem} \label{thm: approx result}
Suppose $(U,V)$ and $(u,v,w) = (u_{\eps},v_{\eps},w_{\eps})$ are as in Lemma~\ref{lm: reg of v, V} and suppose further that the initial values satisfy
\begin{align}
\norm{u_{\eps}(0) - U(0)}_{X_s} + \norm{v_{\eps}(0)-V(0)}_{Y_{s-2}} + \eps \, \norm{v_{\eps}'(0)-V'(0)}_{Y_{s-2}} \le C_1 \eps
\end{align}
for all $\eps \in (0,\eps_0]$ with some $\eps_0 > 0$ and some constant $C_1 = C_{1, s}$. Then there is an $\eps_0' \in (0,\eps_0]$ and a constant $C_2 = C_{2,s}$ such that $I_{\eps} = I$ for all $\eps \in (0,\eps_0']$ and such that
\begin{align} \label{eq: main thm, estimate for all t}
\norm{u_{\eps}(t) - U(t)}_{X_s} + \norm{v_{\eps}(t)-V(t)}_{Y_{s-2}} + \eps \, \norm{v_{\eps}'(t)-V'(t)}_{Y_{s-2}} \le C_2 \eps
\end{align}
for all $t \in I$ and all $\eps \in (0,\eps_0']$.
\end{theorem}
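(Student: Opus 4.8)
The plan is to feed the implicit integral inequalities of Proposition~\ref{prop: int ineq} into a continuity (bootstrap) argument combined with Gronwall's lemma. Define $S_\eps(t) := S_{u,\eps}(t) + S_{(v,w),\eps}(t)$ as there, and note that by hypothesis $S_\eps(0) \le C_1$ (the factors of $\eps$ in the definitions of $R_u, R_v, R_w$ are exactly compensated by the $C_1\eps$ on the right-hand side of the assumed initial bound). First I would fix a target constant $M$ — to be chosen large, depending only on $C_1$, $C_s$, and $\sup_I(\norm{U}_{X_s}+\norm{U'}_{X_s}+\norm{V}_{Y_{s-2}}+\norm{V''}_{Y_{s-2}})$ — and define
\begin{align*}
T_\eps^* := \sup\{\, t \in I_\eps : S_\eps(r) \le M \text{ for all } r \in [0,t] \,\}.
\end{align*}
Since $S_\eps$ is continuous, nondecreasing, and $S_\eps(0)\le C_1 < M$, we have $T_\eps^* > 0$, and on $[0,T_\eps^*]$ every term $\eps S_\eps(r)^2$, $\eps S_\eps(r)^3$, $\eps^2 S_\eps(r)^3$ is bounded by $\eps M^2$, $\eps M^3$, $\eps^2 M^3$, i.e.\ by a constant times $\eps$ once $\eps \le \eps_0$.

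Second, on $[0,T_\eps^*]$ the two inequalities in Proposition~\ref{prop: int ineq} collapse: the first gives
\begin{align*}
S_{u,\eps}(t) \le C\Big(C_1 + \eps T_0 M^2 + \int_0^t S_\eps(r)\,dr\Big),
\end{align*}
and the second gives $S_{(v,w),\eps}(t) \le C\big(C_1 + 1 + S_{u,\eps}(t) + \eps M^2 + \int_0^t S_\eps(r)\,dr + \eps T_0(M^2 + \eps M^3)\big)$. Adding these, absorbing the $S_{u,\eps}(t)$ on the right into $S_\eps(t)$ is not quite immediate, so instead I would first note $S_\eps(t) = S_{u,\eps}(t) + S_{(v,w),\eps}(t) \le (1+C)S_{u,\eps}(t) + C(C_1+1+\eps M^2 + \ldots) + C\int_0^t S_\eps$; plugging the first inequality into this and enlarging $C$ yields a clean scalar inequality
\begin{align*}
S_\eps(t) \le a_\eps + b\int_0^t S_\eps(r)\,dr, \qquad a_\eps := C(C_1 + 1) + C\eps, \quad b := C,
\end{align*}
with $C$ depending only on $s$ and the sup-norms of $U,V$. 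Gronwall then gives $S_\eps(t) \le a_\eps e^{bT_0}$ for all $t \in [0,T_\eps^*]$. Now choose $M := 2C(C_1+1)e^{CT_0}$ and $\eps_0' \le \eps_0$ so small that $C\eps_0' e^{CT_0} \le \tfrac12 M$; then $S_\eps(t) \le M/2 + M/2 = M$ — but strictly, $S_\eps(t) \le a_\eps e^{bT_0} < M$ for $\eps < \eps_0'$. This strict inequality is the crux of the bootstrap: if $T_\eps^* < \sup I_\eps$ then by continuity $S_\eps(T_\eps^*) = M$, contradicting $S_\eps(T_\eps^*) < M$; hence $S_\eps(t) < M$ on all of $I_\eps$, and in particular $S_{(v,w),\eps}$ stays bounded, so the maximal solution cannot blow up inside $I = [0,T_0]$ and therefore $I_\eps = I$ (using the blow-up alternative for maximal mild solutions from~\cite{Pazy}, applied to the $(u,v,w)$-system whose only possible obstruction to global existence on $I$ is unboundedness of the norm).

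Third, having $S_\eps(t) \le a_\eps e^{CT_0} \le C' := C'(C_1,s,U,V)$ uniformly on $I$ for all $\eps \in (0,\eps_0']$, I would translate back: by definition $\norm{u_\eps(t)-U(t)}_{X_s} = \eps\norm{R_u(t)}_{X_s} \le \eps S_{u,\eps}(t)$, likewise $\norm{v_\eps(t)-V(t)}_{Y_{s-2}} = \eps\norm{R_v(t)}_{Y_{s-2}} \le \eps S_{(v,w),\eps}(t)$ and $\eps\norm{v_\eps'(t)-V'(t)}_{Y_{s-2}} = \eps\norm{R_w(t)}_{Y_{s-2}} \le \eps S_{(v,w),\eps}(t)$, so the left side of~\eqref{eq: main thm, estimate for all t} is at most $2\eps S_\eps(t) \le 2C'\eps =: C_2\eps$, as required.

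The main obstacle is the bootstrap bookkeeping, not any hard estimate: one must choose $M$ and $\eps_0'$ in the right order so that the Gronwall output is \emph{strictly} below the bootstrap threshold, and one must correctly handle the term $S_{u,\eps}(t)$ appearing (without an integral) on the right-hand side of the second inequality in Proposition~\ref{prop: int ineq} — it is not a priori small, so it has to be eliminated by substituting the first inequality before invoking Gronwall rather than after. A secondary point requiring a line of care is justifying $I_\eps = I$ from the uniform bound on $S_{(v,w),\eps}$ together with the already-uniform bound on $S_{u,\eps}$, via the standard continuation criterion for mild solutions.
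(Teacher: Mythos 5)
Your proposal is correct and follows essentially the same route as the paper: substitute the $S_{u,\eps}$-estimate into the $S_{(v,w),\eps}$-estimate, run a continuity/bootstrap argument with threshold $M$ combined with Gronwall (the paper keeps the nonlinear factor $p(\eps S_\eps)$ in the exponent and bounds it on the bootstrap interval, whereas you linearize on that interval first -- a cosmetic difference), and conclude $I_\eps=I$ via the blow-up alternative before converting the bound on $S_\eps$ into the stated estimate. No gaps.
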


\begin{proof}
We plug in the estimate for $S_{u,\eps}$ into the estimate for $S_{(v,w),\eps}$ from Proposition~\ref{prop: int ineq} and, by adding the resulting inequality to the inequality for $S_{u,\eps}$, 
we obtain the following inequality for $S_{\eps}$:
\begin{align*}
S_{\eps}(t) \le C \Big( S_{\eps}(0) + 1 + \int_0^t S_{\eps}(r) + \eps S_{\eps}(r)^2 + \eps^2 S_{\eps}(r)^3 + \eps^3 S_{\eps}(r)^4 \, dr \Big)
\end{align*}
for all $t \in I_{\eps}$ and all $\eps \in (0,\infty)$. Since $S_{\eps}(0) \le C_1$ for all $\eps \in (0, \eps_0]$ by assumption, we therefore have that 
\begin{align}
S_{\eps}(t) \le C + C \int_0^t p(\eps S_{\eps}(r)) S_{\eps}(r) \,dr
\end{align}
for all $t \in I_{\eps}$ and all $\eps \in (0, \eps_0]$, where 
$p(\xi) := 1 + \xi + \xi^2 + \xi^3$ for $\xi \in \R^3$. So, by Gronwall's lemma, we obtain
\begin{align} \label{eq: main thm, 1}
S_{\eps}(t) \le C e^{C \int_0^t p(\eps S_{\eps}(r)) \,dr}
\end{align}
for all $t \in I_{\eps}$ and all $\eps \in (0, \eps_0]$. Set $M := C e^{T_0} + 2$ with $C$ being the  constant in~\eqref{eq: main thm, 1} and choose $\eps_0' \in (0,\eps_0]$ such that
\begin{align} \label{eq: main thm, 2}
C e^{p(\eps_0' M) T_0} \le C e^{T_0} + 1 = M-1.
\end{align}
Also, for $\eps \in (0,\eps_0']$ set
\begin{align*}
b_{\eps} := \sup \big\{ t \in I_{\eps}: S_{\eps}(r) \le M \text{ for all } r \in [0,t] \big\}.
\end{align*}
We then have, for all $t \in [0,b_{\eps})$ and all $\eps \in (0,\eps_0']$, that
\begin{align} \label{eq: main thm, 3}
S_{\eps}(t) \le C e^{C \int_0^t p(\eps S_{\eps}(r)) \,dr}
\le C e^{p(\eps_0' M) T_0} \le M-1
\end{align}
by virtue of~\eqref{eq: main thm, 1} and~\eqref{eq: main thm, 2}. If now $b_{\eps}$ was strictly less than $\sup I_{\eps}$ for some $\eps \in (0,\eps_0']$, then from~\eqref{eq: main thm, 3} we would obtain, using the continuity of $S_{\eps}$, a contradiction to the definition of $b_{\eps}$. So, $b_{\eps} = \sup I_{\eps}$ for all $\eps \in (0,\eps_0']$.
It follows from this and from~\eqref{eq: main thm, 3} that 
\begin{align} \label{eq: main thm, 4}
\sup_{t \in I_{\eps}} S_{\eps}(t) = \sup_{t \in [0,b_{\eps})} S_{\eps}(t) \le M-1
\end{align}
and hence
\begin{align} \label{eq: main thm, 5}
\sup_{t \in I_{\eps}} \norm{(u_{\eps}(t),v_{\eps}(t),w_{\eps}(t))}_{X_s \times Y_{s-2} \times Y_{s-2}} \le C + \eps_0' \sup_{t \in I_{\eps}} S_{\eps}(t) < \infty
\end{align}
for all $\eps \in (0,\eps_0']$. Since $(u_{\eps},v_{\eps},w_{\eps})$ by definition is the maximal mild solution of~\eqref{eq: original eq, abstract} with $I_{\eps} \subset I$, it follows from~\eqref{eq: main thm, 5} by the standard blow-up result for mild solutions (Theorem~6.1.4 in~\cite{Pazy}) 
that $I_{\eps}$ must be equal to $I$ for all $\eps \in (0,\eps_0']$. So, invoking~\eqref{eq: main thm, 4} again, we see that 
\begin{align}
\sup_{t \in I} S_{\eps}(t) = \sup_{t \in I_{\eps}} S_{\eps}(t) \le M-1 =: C_2
\end{align}
for all $\eps \in (0,\eps_0']$, and this immediately implies
~\eqref{eq: main thm, estimate for all t}.
\end{proof}

\section{Concluding remarks} \label{remnormalform}

We close this paper with some remarks on  
the connection of the presented approach to normal form transformations. Instead of the approach pursued above, one can try to get rid of the dangerous term
$ \varepsilon^{-1}\Delta^{-1}  (U
\overline{R_u} + \overline{U}R_u)  $ in the equation~\eqref{eq: int eq R_v,R_w} for $R_{(v,w)} := (R_v,R_w)$ 
 by a near-identity change of coordinates of the form
\begin{align*}
\widetilde{R}_{(v,w)} = {R}_{(v,w)} + B(U,R_u)
\end{align*}
where $ B $ is a symmetric bilinear mapping.
Inserting this transformation into the $ R_{(v,w)} $ equation 
yields
\begin{align*}
\partial_t \widetilde{R}_{(v,w)} & =  \varepsilon^{-1}  \Lambda \widetilde{R}_{(v,w)}  - \varepsilon^{-1} \Lambda B(U,R_u) + B(-i\Delta U,R_u) 
+ B( U,-i\Delta R_u) \\ 
&\quad   +  \varepsilon^{-1}\Delta^{-1} \left( U
\overline{R_u} + \overline{U}R_u\right)  + \text{h.o.t.},
\end{align*}
where h.o.t.~stands for the higher-order terms. So, we have to find a bilinear mapping $ B $ 
such that  
$$ 
- \varepsilon^{-1} \Lambda B(U,R_u) + B(-i\Delta U,R_u) 
+ B( U,-i\Delta R_u)  +  \varepsilon^{-1}\Delta^{-1} \left( U
\overline{R_u} + \overline{U}R_u\right) = 0,
$$ 
which is not possible, however, since the non-resonance condition 
$$ \inf_{k,l} |\pm \varepsilon^{-1} + (k-l)^2 -l^2| > 0  $$ 
is not satisfied.
The approach we took above corresponds to the  choice $ B(U,R_u) =  \Lambda^{-1} \Delta^{-1} \left( U
\overline{R_u} + \overline{U}R_u\right) = \mathcal{O}(1)$.
It allowed us to eliminate the terms of order 
$ \mathcal{O}(\varepsilon^{-1}) $
such that after the transform we had
\begin{align*}
\partial_t \widetilde{R}_{(v,w)} & = \varepsilon^{-1}  \Lambda \widetilde{R}_{(v,w)}   + B(-i\Delta U,R_u) 
+ B( U,-i\Delta R_u)  + \dotsb  \\ & =  \varepsilon^{-1}  \Lambda \widetilde{R}_{(v,w)} 
+ \mathcal{O}(1).
\end{align*}
We finally remark that the energy approach chosen in \cite{DSS15} 
for a similar limit in the Klein--Gordon--Zakharov system 
can only be used for \eqref{polaron1}-\eqref{polaron2}
in space dimensions $ d \geq 5 $ due to the occurrence of $ \Delta^{-1}$ which maps $ L^1(\mathbb{R}^d)  \cap L^2(\mathbb{R}^d)  $ into $ L^2(\mathbb{R}^d)  $ only for space dimensions $ d \geq 5 $.


\end{document}